\tikzstyle{every node}=[font=\scriptsize]
\def\centerarc[#1](#2)(#3:#4:#5)
\def\line(#1:#2:#3)
\def\vlinetikz[#1](#2:#3:#4)
\def\beam[#1](#2:#3:#4:#5:#6)
\def\line(#1:#2:#3:#4)
\newcommand{\isomorph}{\cong}
\newcommand{\probGI}{\text{GI}}
\newcommand{\argmin}{\operatornamewithlimits{argmin}}
\newcommand{\circlecover}{\ensuremath{\mathtt{cc}}}
\newcommand{\overlap}{\ensuremath{\mathtt{ov}}}
\newcommand{\disjoint}{\ensuremath{\mathtt{di}}}
\newcommand{\contains}{\ensuremath{\mathtt{cs}}}
\newcommand{\contained}{\ensuremath{\mathtt{cd}}}
\title{Deciding Circular-Arc Graph Isomorphism in Parameterized Logspace}
\author[1]{Maurice Chandoo}
\affil[1]{Leibniz Universität Hannover, Theoretical Computer Science,\\
  Appelstr. 4, 30167 Hannover, Germany\\
  \texttt{chandoo@thi.uni-hannover.de}}
\authorrunning{M. Chandoo}
\subjclass{G.2.2 Graph Theory}
\keywords{graph isomorphism, canonical representation, parameterized algorithm}
\begin{document}

\maketitle

\begin{abstract}
        We compute a canonical circular-arc representation for a given circular-arc (CA) graph which implies solving the isomorphism and recognition problem for this class. To accomplish this we split the class of CA graphs into uniform and non-uniform ones and employ a generalized version of the argument given by Köbler et al.~(2013)  that has been used to show that the subclass of Helly CA graphs can be canonized in logspace. For uniform CA graphs our approach works in logspace and in addition to that Helly CA graphs are a strict subset of uniform CA graphs. Thus our result is a generalization of the canonization result for Helly CA graphs. In the non-uniform case a specific set $\Omega$ of ambiguous vertices arises. By choosing the parameter $k$ to be the cardinality of $\Omega$ this obstacle can be solved by brute force. This leads to an $\mathcal{O}( k + \log n)$ space algorithm to compute a canonical representation for non-uniform and therefore all CA graphs.         
 \end{abstract}
\section{Introduction} 
An arc is a connected set of points on the circle. A graph $G$ is called a CA graph if every vertex $v$ can be assigned to an arc $\rho(v)$ such that two vertices $u, v$ are adjacent iff their respective arcs intersect, i.e.~$\rho(u) \cap \rho(v) \neq \emptyset$.  We call such a (bijective) mapping $\rho$ a CA representation of $G$ and the set of arcs $\rho(G) = \left\{ \rho(v) \mid v \in V(G) \right\}$ a CA model. The recognition problem for CA graphs is to decide whether a given graph $G$ is a CA graph.  The canonical representation problem for CA graphs consists of computing a CA representation $\rho_G$ for a given CA graph $G$ with the additional canonicity constraint that whenever two CA graphs $G$ and $H$ are isomorphic the CA models $\rho_G(G)$ and $\rho_H(H)$ are identical. 

Similarly, a graph is an interval graph if every vertex can be assigned to an interval on a line such that two vertices share an edge iff their intervals intersect. It is easy to see that every interval graph is a CA graph since every interval model is a CA model.

The class of CA graphs started to gain attraction after a series of papers in the 1970's by Alan Tucker. However, there is still no known better upper bound for deciding CA graph isomorphism than for graph isomorphism in general even though considerable effort has been done to this end. There have been two claimed polynomial-time algorithms in \cite{wu}, \cite{hsu} which have been disproven in \cite{esc}, \cite{cur} respectively. For the subclass of interval graphs a linear-time algorithm for isomorphism has been described in \cite{lue}. A series of newer results show that canonical representations for interval graphs, proper CA graphs and Helly CA graphs (a superset of interval graphs) can be computed in logspace \cite{kob:intv,kkv12,kob:hca}. Furthermore, recognition and isomorphism for interval graphs is logspace-hard\cite{kob:intv} and these two hardness results carry over to the class of CA graphs; for recognition the reduction requires a little additional work.

Our main contribution is that we extend the argument used in \cite{kob:hca} to compute canonical representations for HCA graphs to all CA graphs.
We split the class of CA graphs into uniform and non-uniform ones and show that the mentioned argument can be applied in both cases using only $\mathcal{O}(k+\log n)$ space.
The parameter $k$ describes the cardinality of an obstacle set $\Omega$ that occurs only in the non-uniform case. This means $k=0$ in the uniform case and hence we also obtain a logspace algorithm for uniform CA graphs which are a superclass of HCA graphs. To the best of our knowledge this is the first non-trivial algorithm to decide isomorphism specifically for the class of CA graphs. 

This paper is structured as follows. In section 2 we define CA graphs along with their representations and recall the concept of normalized representations from \cite{hsu}. In section 3 we explain what we mean by flip trick and formalize this with the notions of flip sets and candidate functions. This idea has been used by \cite{mcc} to compute a CA representation for CA graphs in linear time and \cite{kob:hca} has modified it to compute canonical (Helly) CA representations for HCA graphs. In section 4 and 5 the flip trick is applied to uniform and non-uniform CA graphs respectively. 

    \begin{figure}[t]
        \centering
        \resizebox{9cm}{!}{%
        \begin{tikzpicture}[shorten >=1pt,auto,node distance=1.2cm,
  main node/.style={circle,draw}]

\newcommand*{\xoff}{0}%
\newcommand*{\yoff}{0}%

\newcommand*{\xoffca}{4.5}%
\newcommand*{\yoffca}{0}%

\newcommand*{\movea}{1.4}%
\newcommand*{\moveb}{0.8*sqrt(1/2)}%

\node[main node] (v1) at ({\xoff-\moveb},{\yoff+\moveb}) {2};
\node[main node] (v2) at ({\xoff-\moveb},{\yoff-\moveb}) {3};
\node[main node] (v3) at ({\xoff+\moveb},{\yoff-\moveb}) {4};
\node[main node] (v4) at ({\xoff+\moveb},{\yoff+\moveb}) {5};
\node[main node] (v5) at ({\xoff-2*\moveb},{\yoff}) {1};
\path[-]
(v1) edge (v4)
(v1) edge (v2)
(v1) edge (v5)
(v2) edge (v3)
(v3) edge (v4)
;
\node (M) at (\xoff,\yoff) {$G$};

\draw[line width=0.6pt,<-] (\xoff+3,\yoff) -- (\xoffca-3.4,\yoff);
\node (M) at (\xoff+2,\yoff+0.25) {$\rho(i) = A_i$};

\centerarc[](\xoffca,\yoffca)(90-10:180+10:0.9);
\centerarc[](\xoffca,\yoffca)(180-10:270+10:1);
\centerarc[](\xoffca,\yoffca)(270-10:360+10:0.9);
\centerarc[](\xoffca,\yoffca)(0-10:90+10:1);
\centerarc[](\xoffca,\yoffca)(125:150:1.01);

\node (a1) at ($({\xoffca+1.2*cos(45)},{\yoffca+1.2*sin(45)})$) {$A_5$};  
\node (a2) at ($({\xoffca+1.1*cos(45-90)},{\yoffca+1.1*sin(45-90)})$) {$A_4$};  
\node (a3) at ($({\xoffca+1.2*cos(45-180)},{\yoffca+1.2*sin(45-180)})$) {$A_3$};   
\node (a4) at ($({\xoffca+0.75*cos(-195)+0.1},{\yoffca+0.75*sin(-195)+0.1})$) {$A_2$};      
\node (a5) at ($({\xoffca+1.2*cos(45+90)-0.1},{\yoffca+1.2*sin(45+90)})$) {$A_1$};      
\node (M) at (\xoffca,\yoffca) {$\rho(G)$};

\end{tikzpicture}
        }
        \caption{A CA graph and representation}
        \label{fig:ca_intro}                
    \end{figure}
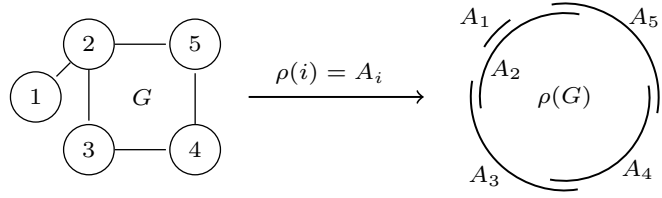  
    
\section{Preliminaries}
    Given two sets $A,B$ we say $A$ and $B$ intersect if $A \cap B \neq \emptyset$. We say $A,B$ overlap, in symbols $A \between B$, if $A \cap B, A \setminus B$ and $B \setminus A$ are non-empty. 
    Let  $A=(a_{u,v})_{u,v \in V(A)} , B =(b_{u,v})_{u,v \in V(B)}$ be two square matrices over vertex sets $V(A),V(B)$. We say $A$ and $B$ are isomorphic, in symbols $A \isomorph B$, if there exists a bijection $\pi \colon V(A) \rightarrow V(B)$ such that $a_{u,v} = b_{\pi(u),\pi(v)}$ for all $u,v \in V(A)$; $\pi$ is called an isomorphism. 
    For two graphs $G,H$ with adjacency matrices $A_G,A_H$ we say that $G \isomorph H$ if $A_G \isomorph A_H$.     
    We consider only undirected graphs without self-loops. A graph class $\mathcal{C}$ is a subset of all graphs which is closed under isomorphism, i.e.~if $G \in \mathcal{C}$ and $H \isomorph G$ then $H \in \mathcal{C}$. We define the graph isomorphism problem for a graph class $\mathcal{C}$ as $\probGI(\mathcal{C}) = \left\{ (G,H) \mid G,H \in \mathcal{C} \text{ and } G \isomorph H  \right\}$.  
    For a graph $G$ and a vertex $v \in V(G)$ we define the open neighborhood $N(v)$ of $v$ as the set of vertices that are adjacent to $v$ and the closed neighborhood $N[v] = N(v) \cup \{v\}$. For a subset of vertices $V' \subseteq V(G)$ we define the common neighborhood of $V'$ as $N[V'] = \bigcap_{v \in V'} N[v]$. We also write $N[u,v,\dots]$ instead of $N[\{u,v,\dots\}]$.  A vertex $v$ is called universal if $N[v] = V(G)$. For two vertices $u \neq v \in V(G)$ we say that $u$ and $v$ are twins if $N[u] = N[v]$.
    A twin class is an equivalence class induced by the twin relation. A graph is said to be without twins if every twin class has cardinality one.  
    For a graph $G$ and $S' \subseteq S \subseteq V(G)$ we define the exclusive neighborhood $N_{S}(S')$ 
    as all vertices $v \in V(G) \setminus S$ such that $v$ is adjacent to all vertices in $S'$ and to none in $S \setminus S'$.
    
    \begin{definition}[Label-independent]
Let $f$ be a function which maps graphs to a subset of subsets of vertices, i.e.~$f(G) \subseteq \mathcal{P}(V(G))$. We say $f$ is label-independent if for every pair of isomorphic graphs $G,H$ and all isomorphisms $\pi$ from $G$ to $H$ it holds that
$$ f(H) = \left\{ \pi(X) \mid X \in f(G) \right\} \text{ with } \pi(X) = \left\{ \pi(v) \mid v \in X \right\} $$ 
    \end{definition}
     
    \subsection{Circular-Arc Graphs and Representations}
    A CA model is a set of arcs $\mathcal{A} = \{A_1,\dots,A_n\}$ on the circle. Let $p \neq p'$ be two points on the circle. Then the arc $A$ specified by $[p,p']$ is given by the part of the circle that is traversed when starting from $p$ going in clockwise direction until $p'$ is reached. We say that $p$ is the left and $p'$ the right endpoint of $A$ and write $l(\cdot),r(\cdot)$ to denote the left and right endpoint of an arc in general. If $A = [p,p']$ then the arc obtained by swapping the endpoints $\overline{A} = [p',p]$ covers the opposite part of the circle. We say $\overline{A}$ is obtained by flipping $A$.    
    When considering a CA model with respect to its intersection structure only the relative position of the endpoints to each other matter. W.l.o.g.~all endpoints can be assumed to be pairwise different and no arc covers the full circle. Therefore a CA model $\mathcal{A}$ with $n$ arcs can be described as a unique string as follows. Pick an arbitrary arc $A \in \mathcal{A}$ and relabel the arcs with $1, \dots, n$ in order of appearance of their left endpoints when traversing the circle clockwise starting from the left endpoint of $A$. Then write down the endpoints in order of appearance when traversing the circle  clockwise starting from the left endpoint of the chosen arc $A$. Do this for every arc and pick the lexicographically smallest resulting string as representation for $\mathcal{A}$. For example, the smallest such string for the CA model in Fig.~\ref{fig:ca_intro} would result from choosing $A_1$ ($l(1),r(1),l(2),r(5),l(3),r(2),\dots$). In the following we identify $\mathcal{A}$ with its string representation. 
    
    Let $G$ be a graph and $\rho = (\mathcal{A},f)$ consists of a CA model $\mathcal{A}$ and a bijective mapping $f$ from the vertices of $G$ to the arcs in $\mathcal{A}$. Then $\rho$ is called a CA representation of $G$ if for all $u \neq v \in V(G)$ it holds that $\{u,v\} \in E(G) \Leftrightarrow f(u) \cap f(v) \neq \emptyset$. We write $\rho(x)$ to mean the arc $f(x)$ corresponding to the vertex $x$, $\rho(G)$ for the CA model $\mathcal{A}$ and for a subset $V' \subseteq V(G)$ let $\rho[V'] = \left\{ \rho(v) \mid v \in V' \right\}$. Given a set $X \subseteq V(G)$ we write $\rho^+(X)$ to denote $\cup_{v \in X} \rho(v)$. A graph is a CA graph if it has a CA representation. 
    
    We say a CA model $\mathcal{A}$ has a hole if there exists a point on the circle which is not contained in any arc in $\mathcal{A}$. Every such CA model can be understood as interval model by straightening the arcs. Therefore a graph is an interval graph if it admits a CA representation with a hole.
    A CA graph $G$ is called Helly (HCA graph) if it has a CA representation $\rho$ such that for all maxcliques, i.e.~inclusion-maximal cliques, $C$ in $G$ it holds that $\bigcap_{v \in C} \rho(v) \neq \emptyset$. Every interval model has the Helly property and therefore every interval graph is an HCA graph.         
    
    \subsection{Normalized Representation}
    In \cite{hsu} it was observed that the intersection type of two arcs $A \neq B$ can be one of the following five types: $A$ and $B$ are disjoint ($\disjoint$), $A$ is contained in $B$ ($\contained$), $A$ contains $B$ ($\contains$), $A$ and $B$ jointly cover the circle (circle cover $\circlecover$) or $A$ and $B$ overlap ($\overlap$) but do not  jointly cover the circle. 
    Using these types we can associate a matrix with every CA model. An intersection matrix is a square matrix with entries $\{ \disjoint, \overlap, \contains, \contained, \circlecover \}$. Given a CA model $\mathcal{A}$ we define its intersection matrix $\mu_\mathcal{A}$ such that $(\mu_{\mathcal{A}})_{a,b}$ reflects the intersection type of the arcs $a \neq b \in \mathcal{A}$. An intersection matrix $\mu$ is called a CA (interval) matrix if it is the intersection matrix of some CA (interval) model.
    
    When trying to construct a CA representation for a CA graph $G$ it is clear that whenever two vertices are non-adjacent their corresponding arcs must be disjoint. If two vertices $u,v$ are adjacent the intersection type of their corresponding arcs might be ambiguous. It would be convenient if the intersection type for every pair of vertices would be uniquely determined by $G$ itself. This can be achieved by associating a graph $G$ with an intersection matrix $\lambda_G$ called the neighborhood matrix which is defined for all $u \neq v \in V(G)$ as
    \[
    (\lambda_G)_{u,v} = 
    \begin{cases}
    \disjoint &, \text{if } \{ u,v \} \notin E(G) \\
    \contained &, \text{if } N[u] \subsetneq N[v] \\
    \contains &, \text{if } N[v] \subsetneq N[u] \\
    \circlecover &, \text{if } N[u] \between N[v] \text{ and } N[u] \cup N[v] = V(G) \\
    & \text{ and } \forall w \in N[u]\setminus N[v]: N[w] \subset N[u] \\
    & \text{ and } \forall w \in N[v]\setminus N[u]: N[w] \subset N[v] \\
    \overlap &, \text{otherwise}
    \end{cases}
    \]
    and the first case applies whose condition is satisfied. 
    
    Let $\mu$ be an intersection matrix over the vertex set $V$ and $\rho = (\mathcal{A},f)$ where $\mathcal{A}$ is a CA model and $f$ is a bijective mapping from $V$ to $\mathcal{A}$. We say $\rho$ is a CA representation of the matrix $\mu$ if $\mu$ is isomorphic to the intersection matrix of $\mathcal{A}$ via $f$ and denote the set of such CA representations with $\mathcal{N}(\mu)$. Then we say $\rho$ is a normalized CA representation of a graph $G$ if $\rho$ is a CA representation of the neighborhood matrix $\lambda_G$ of $G$. An example of a normalized representation can be seen in Fig.~\ref{fig:ca_intro}.  
    Let us denote the set of all normalized CA representations of $G$ with $\mathcal{N}(G) = \mathcal{N}(\lambda_G)$.
    
\begin{lemma}[\cite{hsu}]
    Every CA graph $G$ without twins and universal vertices has a normalized CA representation, that is $\mathcal{N}(G) \neq \emptyset$.    
\end{lemma}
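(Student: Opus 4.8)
The plan is to start from an arbitrary CA representation $\rho$ of $G$, which exists because $G$ is a CA graph, and to reshape it into one whose intersection matrix equals the neighborhood matrix $\lambda_G$. The disjoint entries cost nothing: non-adjacent vertices have disjoint arcs in \emph{every} representation, matching the first case of $\lambda_G$. Thus everything reduces to the adjacent pairs, where the geometrically realized intersection type must be brought into agreement with the type dictated combinatorially by the closed neighborhoods.

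First I would record the implications that hold in any representation, which show that $\lambda_G$ is the only matrix a normalized representation could carry and hence the correct target. If $\rho(u) \subseteq \rho(v)$ then every arc meeting $\rho(u)$ meets $\rho(v)$, so $N[u] \subseteq N[v]$; since $G$ has no twins the inclusion is strict for $u \neq v$, matching $\contained$. Dually for $\contains$. If $\rho(u)$ and $\rho(v)$ jointly cover the circle then every other arc meets one of them, forcing $N[u] \cup N[v] = V(G)$; moreover any $w \in N[u] \setminus N[v]$ has $\rho(w)$ disjoint from $\rho(v)$ and hence contained in the exclusive part of $\rho(u)$, giving $N[w] \subset N[u]$, and symmetrically on the other side. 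These are exactly the side conditions of the $\circlecover$ case, so the partition of adjacency into $\contained, \contains, \circlecover, \overlap$ in $\lambda_G$ is the one forced by the geometry.

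The substance of the proof is the converse --- that the type prescribed by $\lambda_G$ can be realized for all pairs at once. Here I would use that no-twins makes neighborhood inclusion a genuine partial order and process the arcs from the inclusion-maximal neighborhoods inward. For a pair with $N[u] \subsetneq N[v]$ I shrink $\rho(u)$ until it lies inside $\rho(v)$: the inclusion $N[u] \subset N[v]$ certifies that no neighbor of $u$ is dropped and no new adjacency is introduced, so the graph is preserved and the entry becomes $\contained$. The exclusion of universal vertices is exactly what makes this possible: a universal $v$ has $N[v] = V(G)$ and would force every other arc inside $\rho(v)$, collapsing the model to an interval model of $G-v$, which need not exist (for instance when $G-v = C_5$); removing universal vertices deletes this single obstruction. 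For pairs labelled $\overlap$ or $\circlecover$ I adjust only the overhangs, leaving the arcs sharing one boundary region or two respectively, with the side conditions certifying which case applies.

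The main obstacle I anticipate is consistency: altering the arcs of one pair to hit its prescribed type must not corrupt an entry already set correctly for another pair. I would handle this by keeping every modification local to the endpoints being moved and by ordering the sweeps along the neighborhood partial order, so that each adjustment only refines the configuration inside a region that has already been committed and thus changes no earlier entry. The technical heart is to verify this invariant --- that each sweep preserves a valid representation of $G$ while strictly increasing the number of matrix entries agreeing with $\lambda_G$ --- after which termination delivers a representation in $\mathcal{N}(G)$.
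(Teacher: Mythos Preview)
The paper does not supply its own proof of this lemma; it is quoted from \cite{hsu} and used as a black box, so there is no in-paper argument to compare your sketch against.

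Judged on its own, your forward half is fine: arc containment forces strict neighborhood containment (twin-freeness gives strictness), and a geometric circle cover forces all three clauses of the $\circlecover$ case. The backward half, however, has a genuine gap, and it sits earlier than the consistency issue you anticipate at the end. The assertion ``the inclusion $N[u]\subset N[v]$ certifies that no neighbor of $u$ is dropped'' is false. On a circle of length $12$ take $\rho(v)=[0,6]$, $\rho(u)=[4,8]$, $\rho(w)=[7,1]$ (wrapping through $12$), $\rho(x)=[2,3]$, $\rho(y)=[9,11]$. This model is twin-free with no universal vertex, and $N[u]=\{u,v,w\}\subsetneq N[v]=\{u,v,w,x\}$, so $(\lambda_G)_{u,v}=\contained$. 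Yet shrinking $\rho(u)$ to any subarc of $\rho(u)\cap\rho(v)=[4,6]$ destroys the edge $\{u,w\}$, because $\rho(w)$ reaches $\rho(v)$ only by wrapping around the far side and never touches $[4,6]$. The hypothesis $N[u]\subset N[v]$ guarantees that every neighbor of $u$ meets $\rho(v)$ \emph{somewhere}, not that it meets $\rho(u)\cap\rho(v)$. So a single local shrink can already break the underlying graph, independently of how you order later sweeps; the obstacle is not that a later pair gets corrupted but that the very first edit may fail to preserve $G$. Hsu's construction is correspondingly more global than pair-by-pair patching of an arbitrary starting model; if you want to salvage your approach you would at minimum have to move several arcs in concert whenever such wrap-around neighbors are present, and argue that this coordinated move is always available.
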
      
    For our purpose it suffices to consider only graphs without twins and universal vertices for the same reasons as in \cite{kob:hca}. The point is that a universal vertex can be removed from the graph and later added as arc which covers the whole circle in the representation. For each twin class an arbitrary representative vertex can be chosen and colored with the cardinality of its twin class; the other twins are removed.
    
    \begin{lemma}
        \label{lem:simpleonly}
        The canonical CA representation problem for CA graphs is logspace reducible to the canonical CA representation problem for colored CA graphs without twins and universal vertices.  
    \end{lemma}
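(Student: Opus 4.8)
The plan is to exhibit a logspace-computable pair of maps: a \emph{preprocessing} map $G \mapsto G'$ turning an arbitrary CA graph $G$ into a colored CA graph $G'$ without twins and universal vertices, together with an integer $m$ recording how many universal vertices were removed, and a \emph{postprocessing} map $(\rho_{G'},m) \mapsto \rho_G$ that turns any canonical CA representation of $G'$ back into a CA representation of $G$. Canonicity of $\rho_G$ will follow from canonicity of $\rho_{G'}$ once we check that the whole construction is invariant under isomorphisms of $G$.

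For the preprocessing, first let $U$ be the set of universal vertices of $G$ and set $m=|U|$. Then on $G-U$ collapse every twin class to its representative (say the vertex of smallest index in the class) and color that representative with the cardinality of the class; call the resulting colored graph $G'$. Each step is logspace: testing $N[u]=N[v]$ or $N[v]=V(G)$ only requires comparing neighborhoods, and counting a twin class is logspace counting. Two short observations make this well defined. First, a universal vertex can never separate two twins since it lies in every closed neighborhood, so removing $U$ neither merges nor splits twin classes and creates no new twins; hence the twin classes of $G-U$ are exactly the non-universal twin classes of $G$, and $G'$ is twin-free. Second, no vertex of $G-U$ is universal in $G-U$ — a vertex adjacent to all of $G-U$ is also adjacent to all of $U$ and would already be universal in $G$ — and collapsing cannot create a universal representative; thus $G'$ is universal-free and is a legitimate instance of the target problem.

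For the postprocessing, I would read the canonical colored model $\rho_{G'}$ as a string and perform two deterministic surgeries. First, each arc whose vertex carries color $c$ is replaced by a block of $c$ arcs occupying the same region of the circle, inserted in a fixed canonical fashion (for instance $c$ mutually overlapping arcs whose left endpoints appear consecutively followed by their right endpoints in the same order); these $c$ arcs are pairwise adjacent and meet every other arc exactly as the original did, which is precisely the intersection behaviour forced on the arcs of a twin class. Second, I append the $m$ universal vertices as arcs covering the whole circle, using a fixed canonical nested arrangement when $m>1$, exactly as noted in the remark preceding the lemma. The resulting $\rho_G$ is a valid CA representation of $G$: non-twin, non-universal adjacencies are inherited from $\rho_{G'}$, each twin class is represented by one such block, and the universal arcs meet everything. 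All of this is local string manipulation controlled by counters, hence logspace.

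It remains to argue canonicity, which is the only genuinely delicate point. I would show the preprocessing is isomorphism-invariant: any isomorphism $\pi\colon G\to H$ maps universal vertices to universal vertices and twin classes to twin classes of equal size, so it induces a color-preserving isomorphism $G'\cong H'$ and gives $|U_G|=|U_H|$. By the assumed canonicity of the target solver, $G'\cong H'$ forces $\rho_{G'}(G')=\rho_{H'}(H')$ as identical strings. Since the postprocessing is a deterministic function of this string and of the integer $m=|U_G|=|U_H|$ alone, it returns the same model in both cases, i.e.~$\rho_G(G)=\rho_H(H)$. The main work is therefore not in the reduction itself — a routine twin-and-universal reduction — but in fixing the two surgeries so rigidly that they depend only on the canonical model of $G'$ and never on the original labeling of $G$; once that is in place, canonicity propagates immediately.
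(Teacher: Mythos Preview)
Your proposal is correct and follows exactly the approach the paper sketches (remove universal vertices, collapse twin classes and record their sizes as colors, then undo both steps after calling the target solver); you have simply supplied the routine verifications that the paper omits by deferring to \cite{kob:hca}. The only cosmetic discrepancy is that the paper earlier normalizes models so that no arc covers the full circle while you (and the paper's own sketch) reinsert universal vertices as full-circle arcs, but this is harmless for the reduction.
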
    
    Henceforth we assume every graph to be twin-free and without universal vertices and shall only consider normalized representations. 
    For two vertices $u \neq v$ in a graph $G$ we write $u \: \alpha \: v$ instead of $(\lambda_G)_{u,v} = \alpha$. For example, $u \: \contained \: v$ indicates that the arc of $u$ must be contained in the arc of $v$ for every (normalized) representation of $G$. 

\section{Flip Trick} 
    \begin{table}[b]
        \caption{Effects of flipping arcs in the intersection matrix}
        \label{tab:flip}
        \begin{center}
            \begin{tabular}{l | c c c c c}
                $\mu_{A,B}$ & \disjoint & \contained & \contains & \circlecover & \overlap \\
                \hline
                $\mu_{\bar{A},B}$ & \contains & \circlecover & \disjoint & \contained & \overlap \\
                $\mu_{A,\bar{B}}$ & \contained & \disjoint & \circlecover & \contains & \overlap \\
                $\mu_{\bar{A},\bar{B}}$ & \circlecover & \contains & \contained & \disjoint & \overlap 
            \end{tabular}
        \end{center}
    \end{table}   

Let $\mathcal{A}$ be a CA model and $X \subseteq \mathcal{A}$ is a subset of arcs to be flipped. We define the resulting CA model $\mathcal{A}^{(X)} = \left\{ \overline{A} \mid A \in X \right\} \cup \mathcal{A} \setminus X$. Consider a point $x$ on the circle and let $X$ be the set of arcs that contain this point. Then after flipping the arcs in $X$ no other arc contains the point $x$ and thus $\mathcal{A}^{(X)}$ has a hole and therefore must be an interval model. Let $\mu$ and $\mu^{(X)}$ be the intersection matrices of $\mathcal{A}$ and $\mathcal{A}^{(X)}$ respectively. It was observed in \cite{mcc} that the interval matrix $\mu^{(X)}$ can be easily computed using $\mu$ and $X$ as input via Table \ref{tab:flip}. With this the problem of computing a canonical CA representation for colored CA graphs can be reduced to the canonical interval representation problem for colored interval matrices, which can be solved in logspace\cite{kob:hca} (the colored part is not mentioned explicitly but can be easily incorporated into the proof by adding the colors to the leaves of the colored $\Delta$ tree).

The idea is that given a CA graph $G$ if we can compute a set of vertices $X$ as described above then we can obtain a canonical CA representation for $G$ by the following argument. The neighborhood matrix $\lambda_G$ of $G$ is a CA matrix and the matrix $\lambda_G^{(X)}$ must be an interval matrix. Compute a canonical interval representation for $\lambda_G^{(X)}$ and flip the arcs in $X$ back. This leads to a representation for $\lambda_G$ and thus $G$. 
The required set $X$ can be specified as follows.
\begin{definition}
    \label{prop:flipset}
    Let $G$ be a CA graph. Then a non-empty $X \subseteq V(G)$ is a flip set iff there exists a representation $\rho \in \mathcal{N}(G)$ and a point $x$ on the circle such that $v \in X \Leftrightarrow x \in \rho(v)$.     
\end{definition} 
In fact, for the argument to obtain a canonical representation to hold it is only required that $X$ is chosen such that $\lambda_G^{(X)}$ is an interval matrix. However, it can be shown that this is equivalent to the above definition, see Appendix.
Now, we can reframe the argument given in \cite{kob:hca} as follows.
    \begin{definition}[Candidate function]
        \label{def:candidatefunction}
        Let $\mathcal{C}$ be a subset of all CA graphs and $f$ is a function which maps graphs to a subset of subsets of their vertices, i.e.~$f(G) \subseteq \mathcal{P}(V(G))$. We call $f$ a candidate function for $\mathcal{C}$ if the following conditions hold: 
        \begin{enumerate}
            \item For every $G \in \mathcal{C}$ there exists an $X \in f(G)$ such that $X$ is a flip set
            \item $f$ is label-independent
        \end{enumerate}
    \end{definition}
    
    \begin{theorem}
        If $f$ is a candidate function for all CA graphs that can be computed in logspace then the canonical representation problem for CA graphs can be solved in logspace.
    \end{theorem}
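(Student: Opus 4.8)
The plan is to turn the flip trick into a canonical procedure by running it over \emph{every} candidate set returned by $f$, canonicalizing the resulting interval matrices, flipping the chosen arcs back, and finally selecting the lexicographically smallest CA model among all candidates. Label-independence of $f$ is exactly what makes this final selection an isomorphism invariant.

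First I would reduce, by Lemma~\ref{lem:simpleonly}, to a colored graph $G$ without twins and universal vertices, so that $\mathcal{N}(G) \neq \emptyset$, and compute the neighborhood matrix $\lambda_G$ in logspace (each entry requires only comparing closed neighborhoods and checking the overlap and circle-cover conditions). Then, for each $X \in f(G)$, I would compute $\lambda_G^{(X)}$ entrywise from $\lambda_G$ and $X$ via Table~\ref{tab:flip} and decide whether $X$ is usable. By the remark following Definition~\ref{prop:flipset}, $X$ is a flip set iff $\lambda_G^{(X)}$ is an interval matrix, and this can be tested in logspace: run the logspace interval canonization of~\cite{kob:hca} on the colored interval matrix $\lambda_G^{(X)}$ and verify in logspace that the intersection matrix of its output equals $\lambda_G^{(X)}$; $X$ is discarded exactly when this verification fails. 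For a usable $X$ I obtain a canonical colored interval model, flip back the arcs corresponding to the vertices of $X$, and read off the canonical string of the resulting CA model $\mathcal{A}_X$. Since flipping is an involution acting on intersection matrices precisely as in Table~\ref{tab:flip}, we have $(\lambda_G^{(X)})^{(X)} = \lambda_G$, so $\mathcal{A}_X$ is a genuine CA representation of $\lambda_G$ and hence of $G$; by condition~(1) at least one usable $X$ exists, so the procedure always produces output. The algorithm outputs the lexicographically smallest $\mathcal{A}_X$.

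For canonicity, let $G \isomorph H$ via $\pi$; then $\lambda_G \isomorph \lambda_H$ via $\pi$ as well. By label-independence, $f(H) = \{\pi(X) \mid X \in f(G)\}$, and since flipping commutes with relabeling we have $\lambda_H^{(\pi(X))} \isomorph \lambda_G^{(X)}$ via $\pi$; in particular one is an interval matrix iff the other is, so $\pi$ restricts to a bijection between the usable candidates of $G$ and those of $H$. For each usable $X$, canonicity of the interval canonization yields \emph{identical} canonical interval models for $\lambda_G^{(X)}$ and $\lambda_H^{(\pi(X))}$, and the correspondence between original and canonical arcs identifies the arcs of $X$ with those of $\pi(X)$; flipping this common set back therefore produces the same CA model, i.e.~$\mathcal{A}_X = \mathcal{A}_{\pi(X)}$ as strings. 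Hence $G$ and $H$ give rise to the very same set of candidate strings, their lexicographic minima coincide, and the output is canonical.

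The step I expect to be the main obstacle is staying within logspace, since the emitted CA model has length $\Theta(n \log n)$ and cannot be stored on the work tape. I would avoid materializing any candidate string: the currently best candidate is tracked only by an index of $\mathcal{O}(\log n)$ bits, while lexicographic comparisons, and the final emission to the output tape, are performed symbol-by-symbol by recomputing $\lambda_G$, $\lambda_G^{(X)}$, the interval canonization, and the flip-back on demand. As each of these is a logspace transducer and logspace transducers compose (recomputing intermediate results rather than storing them), the whole procedure runs in $\mathcal{O}(\log n)$ space. The only remaining care is to note that a logspace-computable $f$ has output of polynomial length, so $f(G)$ has polynomially many elements and the candidate index indeed fits in $\mathcal{O}(\log n)$ bits.
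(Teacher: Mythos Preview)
Your proposal is correct and follows essentially the same approach as the paper: iterate over all $X\in f(G)$, retain those for which $\lambda_G^{(X)}$ is an interval matrix, canonize each such interval matrix, flip back, and output the lexicographically smallest resulting model, with canonicity argued via label-independence of $f$ and canonicity of the interval step. You supply considerably more detail on the logspace implementation (transducer composition, index-only tracking of the current minimum) than the paper does, but the underlying argument is the same.
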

    \begin{proof}
        Let $G$ be a graph. To decide the recognition problem for CA graphs observe that there is a flip set in $f(G)$ iff $G$ is a CA graph. To verify if a set $X \subseteq V(G)$ is a flip set one can check if $\lambda_G^{(X)}$ is an interval matrix by trying to compute an interval representation. 
        
        For the representation problem let $G$ be a CA graph. Let $F$ be the subset of $f(G)$ such that every $X \in F$ is a flip set. By the first condition it holds that $F$ is non-empty.
        For every $X \in F$ a CA representation $\rho_X$ of $G$ can be computed by the previous argument. We return a CA representation with the lexicographically smallest underlying model
        $$ \argmin_{\{ \rho_X \: | \: X \in F \}} \rho_X(G) $$
        as canonical CA representation.
        To see that this is indeed a canonical representation consider two isomorphic CA graphs $G,H$ with $F_G,F_H$ defined as $F$ for $G$ previously. Let $\pi$ be an isomorphism from $G$ to $H$ and $M_G = \{ \rho_X(G) \: | \: X \in F_G \}$ is the set of CA models induced by the flip sets $F_G$, similarly define $M_H$. Then canonicity follows by showing $M_G = M_H$. We show that  $M_G \subseteq M_H$ as the argument for the other direction is analogous. Let $\mathcal{A}$ be a model in $M_G$. Let $X$ be a flip set in $F_G$ which induces $\mathcal{A}$, i.e.~$\rho_X(G) = \mathcal{A}$. It must hold that $\pi(X) \in f(H)$ since $f$ is label-independent. From $G \isomorph H$ it follows that $\lambda_G \isomorph \lambda_H$ and therefore the interval matrices $\lambda_G^{(X)}$ and $\lambda_H^{(\pi(X))}$ are isomorphic meaning that $\pi(X) \in F_H$ since it is a flip set as well. 
        As the interval representations of both interval matrices have identical underlying models due to canonicity it follows that they remain so after flipping $X$ resp.~$\pi(X)$. Therefore it holds that $\mathcal{A} \in M_H$.
        
        This works in logspace since $f$ and the representation $\rho_X$ for every flip set $X$ can be computed in logspace.   
    \end{proof}    
So, we have reduced the problem of computing a canonical CA representation for CA graphs to the problem of computing a candidate function for CA graphs.

Let $\mathcal{C}$ and $\mathcal{C'}$ be two graph classes that partition all CA graphs. If $f_{\mathcal{C}},f_{\mathcal{C}'}$ are candidate functions for $\mathcal{C}, \mathcal{C'}$ respectively then $f(G) = f_{\mathcal{C}}(G) \cup f_{\mathcal{C}'}(G)$ is a candidate function for all CA graphs. That $f$ is label-independent follows from label-independent functions being closed under taking unions.
The crux here is that we do not need to be able to distinguish if a CA graph $G$ is in $\mathcal{C}$ or $\mathcal{C'}$. Hence, in the next two sections we consider two such classes that partition all CA graphs while avoid dealing with recognition of these two classes.

We complete this section by stating the candidate function used in \cite{kob:hca} to canonize HCA graphs and explain why it is a candidate function for this subclass of CA graphs. 
$$ f_{\text{HCA}}(G) =  \bigcup_{u,v \in V(G)}  \big\{ N[u,v]  \big\} $$

$f_{\text{HCA}}$ always returns at least one flip set for an HCA graph because all maxcliques in an HCA graph are flip sets due to the Helly property and there exists at least one maxclique in every HCA graph that can be characterized as the common neighborhood of two vertices\cite{kob:hca}. However, neither of these two properties hold for CA graphs in general.

It remains to argue that $f_{\text{HCA}}$ can be computed in logspace and is label-independent. Since the same arguments have to be made for the two candidate functions devised in the next sections we introduce a tool that facilitates this and demonstrate it for $f_{\text{HCA}}$.

\begin{definition}
    Let $\varphi$ be a first-order (FO) formula over graph structures with $k+1$ free variables. Then we define the function $f_\varphi$ for a graph $G$ as:
    $$ f_\varphi(G) = \bigcup\limits_{(v_1,\dots,v_k) \in V(G)^k}  
    \big\{ \left\{ u \in V(G)  \mid G \models \varphi(v_1,\dots,v_k,u) \right\}  \big\}
    $$
\end{definition}
\begin{lemma}
    \label{lem:easy}
    For every FO formula $\varphi$ over graph structures the function $f_\varphi$ is computable in logspace and label-independent.        
\end{lemma}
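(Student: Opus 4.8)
The plan is to prove the two claims separately, since they correspond to the two conditions in Definition~\ref{def:candidatefunction} and rely on different structural features of FO formulas. Both claims hinge on the observation that $f_\varphi(G)$ is defined by iterating over all tuples $(v_1,\dots,v_k) \in V(G)^k$ and, for each tuple, collecting the set of vertices $u$ satisfying the fixed formula $\varphi(v_1,\dots,v_k,u)$. The key facts I would exploit are that FO model checking on a fixed formula is in logspace (indeed in $\mathrm{AC}^0$) and that the isomorphism-invariance of FO satisfaction gives label-independence essentially for free.

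For the logspace computation, first I would note that the formula $\varphi$ is \emph{fixed}, so $k$ is a constant. The plan is to enumerate the $n^k$ tuples $(v_1,\dots,v_k)$ using $k$ counters, each of which needs only $\mathcal{O}(\log n)$ bits; since $k$ is constant this is $\mathcal{O}(\log n)$ space total. For each fixed tuple I would then iterate over all $u \in V(G)$ and evaluate the sentence $G \models \varphi(v_1,\dots,v_k,u)$, outputting $u$ into the current set exactly when the sentence holds. The evaluation of a fixed FO formula over a graph is computable in logspace: quantifiers range over $V(G)$ so each can be handled by a single $\mathcal{O}(\log n)$-bit counter, atomic formulas reduce to adjacency or equality tests on the input, and the Boolean connectives are combined recursively with the fixed nesting depth contributing only a constant number of simultaneously-stored pointers. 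Composing these logspace-computable stages yields $f_\varphi$ in logspace overall.

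For label-independence, let $G,H$ be isomorphic graphs and let $\pi$ be any isomorphism from $G$ to $H$. The plan is to use the standard fact that isomorphisms preserve FO satisfaction, namely $G \models \varphi(v_1,\dots,v_k,u)$ iff $H \models \varphi(\pi(v_1),\dots,\pi(v_k),\pi(u))$. Consequently, for each tuple $(v_1,\dots,v_k)$ the set $\{u \mid G \models \varphi(\bar v, u)\}$ is mapped by $\pi$ exactly onto the set $\{u' \mid H \models \varphi(\pi(\bar v), u')\}$. As $(v_1,\dots,v_k)$ ranges over $V(G)^k$, the image $(\pi(v_1),\dots,\pi(v_k))$ ranges over all of $V(H)^k$ because $\pi$ is a bijection, so the two families of sets are in bijective correspondence via $\pi$. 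This establishes $f_\varphi(H) = \{ \pi(X) \mid X \in f_\varphi(G) \}$, which is precisely the label-independence condition from the definition.

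I do not expect a genuine obstacle here; both parts are direct applications of well-known properties of fixed FO formulas. The only point requiring a little care is the logspace evaluation of $G \models \varphi$: one must be explicit that the quantifier depth and formula size are constants (independent of $n$), so that the recursive evaluation reuses a constant number of logarithmic-sized pointers rather than a stack whose depth grows with the input. Making this bookkeeping precise is the mildly technical step, but it is routine and does not obstruct the argument.
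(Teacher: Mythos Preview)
Your proposal is correct and follows essentially the same approach as the paper: logspace evaluation of a fixed FO formula plus iteration over all parameter tuples for the first part, and isomorphism-invariance of FO satisfaction for the second. The only presentational difference is that the paper spells out the invariance claim as a separate lemma proved by structural induction on $\varphi$, whereas you invoke it as a standard fact; conversely, you give more detail than the paper on why the logspace bound holds (constant quantifier depth, constant number of $\mathcal{O}(\log n)$-bit counters).
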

To compute $f_\varphi$ we can successively evaluate $\varphi$ and take the union of the results, which can be both done in logspace. To show that $f_\varphi$ is label-independent it suffices to apply structural induction to $\varphi$. The full argument for this can be found in the Appendix. 
Then $f_{\text{HCA}}$ is computed by the FO formula $\varphi(u,v,x)$ that states that $x \in N[u,v]$. By Lemma \ref{lem:easy} it follows that $f_{\text{HCA}}$ is logspace-computable and label-independent. 

\section{Uniform CA graphs}
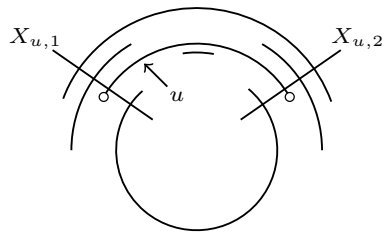
\begin{figure}[b]
    \centering
    \resizebox{5.5cm}{!}{%
        \begin{tikzpicture}[shorten >=1pt,auto,node distance=1.2cm,
  main node/.style={circle,draw}]

    \centerarc[](0,0)(30:150:1.2);
    \node[circle,draw,inner sep=0,minimum size=0.1cm,fill=white] (l1) at ($({1.2*cos(30)},{1.2*sin(30)})$) {};
    \node[circle,draw,inner sep=0,minimum size=0.1cm,fill=white] (r1) at ($({1.2*cos(150)},{1.2*sin(150)})$) {};    
    \centerarc[](0,0)(80:100:1.1);
    \centerarc[](0,0)(20:160:1.6);    
    \centerarc[](0,0)(30+20:150-360-20:0.9);    
    \centerarc[](0,0)(180:120:1.4);
    \centerarc[](0,0)(0:60:1.4);    

    \beam[](0:0:2:0.6:145)
    \node (xu1) at ($({2.2*cos(145)},{2.2*sin(145)})$) {$X_{u,1}$};
    \beam[](0:0:2:0.6:35)
    \node (xu2) at ($({2.2*cos(35)},{2.2*sin(35)})$) {$X_{u,2}$};
    
    \node (xu1) at ($({0.65*cos(110)},{0.65*sin(110)})$) {$u$};
    \beam[->]({0.65*cos(110)}:{0.65*sin(110)}:0.55:0.15:135)

\end{tikzpicture}
    }
    \caption{Uniform target flip sets}
    \label{fig:uniform_flipset}                    
\end{figure}  

The difficulty when trying to compute flip sets for CA graphs in general is that for a CA graph $G$ there might be different normalized CA representations such that a set of vertices shares a common point in one representation but not in an other one. We show a subset of CA graphs, namely the uniform CA graphs, where this issue does not occur therefore making it easy to compute flip sets.

For a CA graph $G$ consider an arbitrary vertex $u$. Looking at the neighbors of $u$ we can try to compute the flip sets $X_{u,1},X_{u,2}$ specified in Fig.~\ref{fig:uniform_flipset}. Both sets contain $u$ and all vertices that contain $u$ or form a circle cover with $u$. The vertices that overlap with $u$ belong to either $X_{u,1}$ or $X_{u,2}$ depending on the side they overlap from with $u$. Since we cannot determine left and right from the neighborhood matrix we want to express an equivalence relation $\sim_u$ which states that two vertices overlap from the same side with $u$. With the two equivalence classes induced by $\sim_u$ the two flip sets can be expressed. 

Given two vertices $x,y$ that both overlap with $u$ it is for instance easy to see that they must overlap from different sides with $u$ if they are disjoint. The only intersection type between $x,y$ for which the situation is not immediately clear is $\overlap$ as further distinctions are required. An $\overlap$-triangle is a set of three pairwise overlapping vertices. If $x$ and $y$ overlap then $x,y,u$ form such an $\overlap$-triangle. Consider the possible normalized representations for an $\overlap$-triangle. The three vertices can either all jointly cover the circle or be a set of overlapping intervals. In the first case they overlap pairwise but their overall intersection is empty thus we call this a non-Helly triangle and the second case an interval triangle. For an interval triangle there are three different possible representations up to reflection depending on which of the three vertices is placed in-between the other two. If $x,y,u$ is an $\overlap$-triangle then $x$ and $y$ overlap from the same side with $u$ iff $x,y,u$ form an interval triangle and $u$ is not in-between $x,y$.
We show that it is easy to derive this information in the case of a uniform CA graph $G$ as it does not depend on a representation of $G$. 

    \begin{definition}
        \label{def:deltag}
        Let $G$ be a graph. An $\overlap$-triangle $T$ is in the set $\Delta_G$ if the following holds:
        \begin{enumerate}
            \item $ \bigcup\limits_{v \in T} N[v] = V(G) $
            \item For all $x \in T$ it holds that if a vertex $v \in N_T(x)$ then $v \: \contained \: x$ 
        \end{enumerate}        
    \end{definition}        
    
    \begin{definition}[Uniform CA graph]
        A CA graph $G$ is uniform if for all $\overlap$-triangle $T$ in $G$ and $\rho \in \mathcal{N}(G)$ it holds that:
        $$   T \in \Delta_G \Rightarrow \rho[T] \text{ is a non-Helly triangle}$$
    \end{definition}    
    The idea behind the definition of $\Delta_G$ is that it captures the properties that an $\overlap$-triangle must satisfy in the graph if it can be represented as non-Helly triangle. The definition of uniform CA graphs guarantees us that an $\overlap$-triangle $T \in \Delta_G$ can never be represented as interval triangle.
    Now, we can show that for the class of uniform CA graphs the property of being a non-Helly triangle and the in-between predicate for interval triangles is invariant across all normalized representations.    
    \begin{lemma}
        \label{lem:deltag}
        Let $G$ be a uniform CA graph. Then the following statements are equivalent for every $\overlap$-triangle $T$:
        \begin{enumerate}
            \item $T \in \Delta_G$
            \item $\exists \rho \in \mathcal{N}(G): \rho[T]$ is a non-Helly triangle 
            \item $\forall \rho \in \mathcal{N}(G): \rho[T]$ is a non-Helly triangle
           \end{enumerate}
   \end{lemma}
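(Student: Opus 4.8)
The plan is to establish the cycle of implications $(1) \Rightarrow (3) \Rightarrow (2) \Rightarrow (1)$, where the first two steps are essentially free and all the geometric content lives in $(2) \Rightarrow (1)$. The implication $(1) \Rightarrow (3)$ is immediate from the definition of a uniform CA graph, which says precisely that for every $\overlap$-triangle $T$ and every $\rho \in \mathcal{N}(G)$, if $T \in \Delta_G$ then $\rho[T]$ is a non-Helly triangle. For $(3) \Rightarrow (2)$ I would invoke that $G$, being a CA graph without twins and universal vertices, satisfies $\mathcal{N}(G) \neq \emptyset$; a statement holding for all representations in a non-empty set holds for at least one of them.

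The heart of the argument is $(2) \Rightarrow (1)$, which does not use uniformity at all. Fix $\rho \in \mathcal{N}(G)$ with $\rho[T]$ a non-Helly triangle and write $T = \{a,b,c\}$. The defining feature of a non-Helly triangle is that the three arcs jointly cover the circle, i.e.~$\rho^+(T)$ is the whole circle. For the first condition of $\Delta_G$, take any $w \in V(G) \setminus T$ and a point $p \in \rho(w)$; since $p$ lies in one of $\rho(a),\rho(b),\rho(c)$, the arc $\rho(w)$ meets that arc, so $w$ is adjacent to a vertex of $T$ and hence $w \in \bigcup_{v \in T} N[v]$. This yields $\bigcup_{v \in T} N[v] = V(G)$.

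For the second condition, fix $x \in T$ and $v \in N_T(x)$; by definition $\rho(v)$ is disjoint from the arcs of the other two vertices of $T$. Because $T$ is an $\overlap$-triangle, those two arcs have intersection type $\overlap$, so their union is a single proper sub-arc of the circle and its complement $I$ is again a single arc. Since $\rho^+(T)$ is the whole circle, every point outside those two arcs lies in $\rho(x)$, so $I \subseteq \rho(x)$; moreover $I \subsetneq \rho(x)$ because $\rho(x)$ also meets each of the other two arcs. As $\rho(v)$ avoids both of the other arcs we get $\rho(v) \subseteq I \subsetneq \rho(x)$, so the intersection type of $\rho(v)$ and $\rho(x)$ is $\contained$. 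Since $\rho$ is normalized this type equals $(\lambda_G)_{v,x}$, hence $v \: \contained \: x$, which is exactly what the second condition demands.

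The main obstacle is this containment step: one must argue carefully that the region of the circle avoided by the two non-neighbours of $v$ in $T$ is a single arc and that this arc is contained in $\rho(x)$. This relies both on the pairwise type inside $T$ being $\overlap$ (so that no two of the three arcs already cover the circle, keeping their union and its complement connected) and on the triangle being non-Helly (so that the three arcs cover the circle, forcing the complement into $\rho(x)$). Everything else reduces to unwinding the definitions of $N_T(\cdot)$, $\Delta_G$, and normalized representation.
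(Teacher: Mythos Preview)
Your argument is correct and follows the same logical structure the paper has in mind: $(1)\Rightarrow(3)$ is the definition of uniformity, $(3)\Rightarrow(2)$ uses $\mathcal{N}(G)\neq\emptyset$, and $(2)\Rightarrow(1)$ is the geometric fact that $\Delta_G$ was designed to capture exactly the graph-theoretic footprint of a non-Helly triangle. The paper's own proof is a single sentence (``This immediately follows from the definition of $\Delta_G$ and uniform CA graphs''), relying on the remark preceding the lemma that $\Delta_G$ was set up precisely so that $(2)\Rightarrow(1)$ holds; you have simply written out that verification in full, which is a strict improvement in rigor rather than a different approach.
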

   \begin{proof}
     This immediately follows from the definition of $\Delta_G$ and uniform CA graphs.
   \end{proof}    
As a contrasting example of a (non-uniform) CA graph for which being a non-Helly triangle depends on the representation consider the graph $G$ obtained by taking the complement of the disjoint union of three $K_2$'s (a triforce with a circle around the outer corners). Every edge in $G$ is an $\overlap$-entry in $\lambda_G$ and a CA model for $G$ is given by precisely two non-Helly triangles. The possible assignments of the vertices to the arcs that follow from the automorphisms of $G$ yield the different representations.      
    
    \begin{definition}
        Let $G$ be a graph and $T = \{u,v,w\}$ is an $\overlap$-triangle with $T \notin \Delta_G$. We say $v$ is in-between $u,w$ if at least one of the following holds:
        \begin{enumerate}
            \item $N_T(u), N_T(w) \neq \emptyset$
            \item $N_T(u,w) \neq \emptyset$ and there exists $z \in N_T(u,w)$ such that $\{u,w,z\} \in \Delta_G$
        \end{enumerate}
    \end{definition}    

    \begin{lemma}
        \label{lem:inbtw}
        Let $G$ be a uniform CA graph and $T=\{u,v,w\}$ is an $\overlap$-triangle with $T \notin \Delta_G$. Then the following statements are equivalent:
        \begin{enumerate}
            \item $v$ in-between $u,w$
            \item $\exists \rho \in \mathcal{N}(G): \rho(v) \subset \rho(u) \cup \rho(w) $
            \item $\forall \rho \in \mathcal{N}(G): \rho(v) \subset \rho(u) \cup \rho(w) $
        \end{enumerate}                
    \end{lemma}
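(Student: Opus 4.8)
The plan is to translate all three statements into one geometric picture and then close a cycle $(1)\Rightarrow(3)\Rightarrow(2)\Rightarrow(1)$. Since $G$ is uniform and $T\notin\Delta_G$, Lemma~\ref{lem:deltag} guarantees that $\rho[T]$ is an interval triangle for \emph{every} $\rho\in\mathcal{N}(G)$. In an interval triangle the three arcs are genuine overlapping intervals with a common point, so there is a unique ``middle'' vertex $m$, characterized geometrically by $\rho(m)\subset\rho^+(T\setminus\{m\})$, while each of the two ``outer'' vertices sticks out beyond the union of the other two. Thus statements $(2)$ and $(3)$ just say that $v$ is the middle vertex in some, respectively every, representation, and the whole lemma amounts to showing that the identity of the middle vertex is representation-invariant and is detected by the combinatorial predicate $(1)$.

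The technical core will be two facts about a fixed $\rho$, both obtained by a routine region analysis (order the three arcs by their endpoints $l_1<l_2<l_3$ and $r_1<r_2<r_3$ and read off which sub-arc of the circle is covered by which subset of $T$). First, the middle vertex $m$ has \emph{no} exclusive neighbour, i.e.\ $N_T(m)=\emptyset$: every point of $\rho(m)$ lies in an outer arc, so any arc meeting $\rho(m)$ meets an outer arc as well. As $N_T(m)$ is graph-defined, this already shows that $N_T(x)\neq\emptyset$ forces $x$ to be outer in \emph{every} $\rho$. Second, fixing $\rho$, for a vertex $z\in N_T(u,w)$ the triple $\{u,w,z\}$ is non-Helly \emph{in $\rho$} iff $v$ is the middle vertex \emph{in $\rho$}: if $v$ is in the middle then $z$, avoiding $\rho(v)$ but meeting both outer arcs, is forced to wrap around the gap behind $T$, so $\rho(u)\cup\rho(w)\cup\rho(z)$ covers the circle; whereas if $u$ (or $w$) is in the middle then $z$ avoids the outer arc $\rho(v)$ and hence leaves the region exclusive to $\rho(v)$ uncovered, so $\{u,w,z\}$ has a common point. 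Applying Lemma~\ref{lem:deltag} to the $\overlap$-triangle $\{u,w,z\}$ turns ``non-Helly in $\rho$'' into the representation-independent condition $\{u,w,z\}\in\Delta_G$.

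With these facts the implications are short. For $(1)\Rightarrow(3)$ fix any $\rho$ (so exactly one of $u,v,w$ is the middle vertex) and assume $v$ is in-between $u,w$. If case~1 holds, then $N_T(u),N_T(w)\neq\emptyset$ force both $u$ and $w$ to be outer by the first fact, so $v$ is the middle vertex. If case~2 holds, take the witness $z$ with $\{u,w,z\}\in\Delta_G$; by Lemma~\ref{lem:deltag} the triple is non-Helly in $\rho$, so by the second fact $v$ is the middle vertex in $\rho$. Hence $\rho(v)\subset\rho(u)\cup\rho(w)$ in every $\rho$. The implication $(3)\Rightarrow(2)$ is immediate because $\mathcal{N}(G)\neq\emptyset$ (a normalized representation exists since $G$ is twin-free and has no universal vertices). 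For $(2)\Rightarrow(1)$ take the $\rho$ in which $v$ is the middle vertex. If both $N_T(u)$ and $N_T(w)$ are non-empty we are in case~1. Otherwise, say $N_T(u)=\emptyset$; since $u\overlap v$ the definition of $\lambda_G$ gives $N[u]\setminus N[v]\neq\emptyset$, and any witness $z$ in this set is adjacent to $u$ but not $v$ and lies outside $T$, hence (as $N_T(u)=\emptyset$) satisfies $z\in N_T(u,w)$. By the second fact $\{u,w,z\}$ is non-Helly in $\rho$, so Lemma~\ref{lem:deltag} yields $\{u,w,z\}\in\Delta_G$ and case~2 holds; the subcase $N_T(w)=\emptyset$ is symmetric.

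The main obstacle is establishing the second geometric fact cleanly: verifying that a vertex $z\in N_T(u,w)$ is forced to wrap around the back of the circle \emph{exactly} when $v$ is the middle vertex, and checking that the triple $\{u,w,z\}$ is a genuine $\overlap$-triangle (no pair is a containment or a circle cover) so that Lemma~\ref{lem:deltag} applies. This amounts to ruling out the competing intersection types directly from the endpoint ordering; once the region structure of the interval triangle is written down these verifications are mechanical, but they have to be carried out for both the $v$-middle and the $u/w$-middle configurations.
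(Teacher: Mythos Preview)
Your proposal is correct and follows essentially the same approach as the paper. The paper argues the cycle $(2)\Rightarrow(1)$, $(1)\Rightarrow(3)$, $(3)\Rightarrow(2)$ using exactly your two geometric observations---that the middle vertex has empty exclusive neighbourhood $N_T(m)=\emptyset$, and that for $z\in N_T(u,w)$ the triple $\{u,w,z\}$ is non-Helly in $\rho$ precisely when $v$ is the middle vertex---though it states them inline rather than isolating them as separate facts; your write-up is in fact more explicit than the paper's, which leaves the final contradiction in $(1)\Rightarrow(3)$ (``$\rho(z)$ must be disjoint from $\rho(v)$, contradiction'') for the reader to unpack.
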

    \begin{proof}
        ``$2 \Rightarrow 1$'': There exists $\rho \in \mathcal{N}(G)$ such that $\rho(v) \subset \rho(u) \cup \rho(w)$. For all $x \neq y \in T$ it must hold that $N[x] \setminus N[y] \neq \emptyset$ due to the fact that $x \: \overlap \: y$ implies $N[x] \between N[y]$. For this to be true $N_T(x)$ or $N_T(x,z)$ must be non-empty for $z \in T \setminus \{x,y\}$. It follows that  $N_T(u)$ and $N_T(w)$ or $N_T(u,w)$ must be non-empty since $N_T(v) = \emptyset$ due to $\rho(v) \subset \rho(u) \cup \rho(w)$. If $z \in N_T(u,w)$ then for $\rho(z)$ to intersect with $\rho(u)$ and $\rho(w)$ but not with $\rho(v)$ implies that $\rho[u,w,z]$ forms a non-Helly triangle and therefore $\{u,w,z\} \in \Delta_G$ by Lemma \ref{lem:deltag}.  
        \\``$1 \Rightarrow 3$'': Assume there exists a $\rho \in \mathcal{N}(G)$ such that $\rho(u) \subset \rho(v) \cup \rho(w)$. This implies that $N_T(u) = \emptyset$ and therefore there exists a $z \in N_T(u,w)$ such that $\{u,w,z\} \in \Delta_G$. By Lemma \ref{lem:deltag} it follows that $\rho[u,w,z]$ must be a non-Helly triangle and $\rho(z)$ must be disjoint from $\rho(v)$, contradiction.
        \\``$3 \Rightarrow 2$'': is clear.
    \end{proof}  
    Lemma \ref{lem:deltag} and \ref{lem:inbtw} state a fact of the form that if a property holds in one representation then it holds in all representations hence the name uniform CA graphs.

    The $\alpha$-neighborhood of a vertex $u$ in a graph $G$ is defined as $N^{\alpha}(u) = \left\{ v \in N(u) \mid u \: \alpha \: v   \right\}$ for $\alpha \in \{ \overlap, \contained, \contains, \circlecover \}$.  
    Now, we can define the aforementioned equivalence relation $\sim_u$ and state the candidate function for uniform CA graphs. 
    
    \begin{definition}
        Given a graph $G$ and vertex $u \in V(G)$ we define the relation $\sim_u$ on $N^{\overlap}(u)$ such that $x \sim_u y$ holds if one of the following applies:
        \begin{enumerate}
            \item $x = y$
            \item $x$ contains or is contained in $y$
            \item $x$ and $y$ overlap, $\{x,y,u\} \notin \Delta_G$ and $u$ is not in-between $x,y$
        \end{enumerate}            
    \end{definition}  
    
    \begin{lemma}
        \label{lem:sameside}
        Let $G$ be a uniform CA graph and $u \in V(G)$. Then $x \sim_u y$ holds iff $x$ and $y$ overlap from the same side with $u$ in every $\rho \in \mathcal{N}(G)$.            
    \end{lemma}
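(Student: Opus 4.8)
The plan is to fix $x,y \in N^{\overlap}(u)$ and note that, in any $\rho \in \mathcal{N}(G)$, since both $x$ and $y$ overlap $\rho(u)$ (and no arc covers the full circle), each of $\rho(x),\rho(y)$ contains \emph{exactly one} endpoint of the arc $\rho(u)$; \emph{overlapping from the same side} then just means that $\rho(x)$ and $\rho(y)$ contain the \emph{same} endpoint of $\rho(u)$. The whole statement reduces to showing that the three defining cases of $\sim_u$ characterize exactly when this condition holds, and that the condition is the same in every $\rho \in \mathcal{N}(G)$ (a priori it could vary). Since the intersection type $(\lambda_G)_{x,y}$ is fixed across all of $\mathcal{N}(G)$, I would run a case analysis on this type.

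I would handle the four possible values of $(\lambda_G)_{x,y}$ in turn. For $x \disjoint y$: disjoint arcs cannot both contain the same endpoint of $\rho(u)$, so $\rho(x),\rho(y)$ overlap $\rho(u)$ from different sides in every $\rho$, matching that $\disjoint$ is not a case of $\sim_u$. For $x \contained y$ or $x \contains y$ (case 2): the endpoint of $\rho(u)$ covered by the smaller arc also lies in the larger arc, and since the larger arc overlaps $\rho(u)$ it covers exactly one endpoint, necessarily that same one, so same side in every $\rho$. For $x \circlecover y$: $\rho(x) \cup \rho(y)$ is the whole circle, so the two endpoints of $\rho(u)$ cannot both be missed and must be split between the two arcs, giving different sides in every $\rho$, matching that $\circlecover$ is not a case of $\sim_u$. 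Each of these is a short geometric observation.

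The remaining and main case is $x \overlap y$, where $T=\{x,y,u\}$ is an $\overlap$-triangle and the two lemmas do the work. If $T \in \Delta_G$, then by Lemma~\ref{lem:deltag} $\rho[T]$ is a non-Helly triangle in every $\rho$; there no point lies in all three arcs, so $\rho(x),\rho(y)$ cannot share an endpoint of $\rho(u)$ and hence overlap from different sides, consistent with case 3 requiring $\{x,y,u\}\notin\Delta_G$. If $T\notin\Delta_G$, then $\rho[T]$ is an interval triangle in every $\rho$, and I would establish the geometric equivalence that in an interval triangle $\rho(x),\rho(y)$ cover the same endpoint of $\rho(u)$ iff $\rho(u)\not\subset\rho(x)\cup\rho(y)$ (i.e.~$u$ is an outer arc, not the middle one). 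Combining this with Lemma~\ref{lem:inbtw}, which says $u$ is in-between $x,y$ iff $\rho(u)\subset\rho(x)\cup\rho(y)$ in every representation, yields exactly: same side in every $\rho$ iff $u$ is not in-between $x,y$, which is precisely case 3 of $\sim_u$.

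I expect the main obstacle to be this interval-triangle sub-case: making precise the correspondence between the in-between predicate (phrased via $\rho(u)\subset\rho(x)\cup\rho(y)$) and the side on which $x,y$ meet $u$, and confirming that this correspondence is uniform across all representations. Once the geometry of interval triangles is set up (three pairwise-overlapping arcs with a common point, exactly one contained in the union of the other two), it is a finite check, and representation-independence is inherited from Lemmas~\ref{lem:deltag} and~\ref{lem:inbtw}.
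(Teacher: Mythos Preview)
Your proposal is correct and follows essentially the same approach as the paper: a case analysis on $(\lambda_G)_{x,y}$, with the $\disjoint$ and $\circlecover$ cases forcing different sides, the $\contained/\contains$ case forcing the same side, and the $\overlap$ case handled by splitting on $\Delta_G$ and invoking Lemmas~\ref{lem:deltag} and~\ref{lem:inbtw}. The paper's proof is simply terser---it packages the same distinctions into a $\Rightarrow$/$\Leftarrow$ split and declares the non-$\overlap$ cases ``clear''---whereas you spell out the endpoint-of-$\rho(u)$ geometry explicitly, but the underlying argument is the same.
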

    \begin{proof}
        "$\Rightarrow$": If $x,y$ are in a contained/contains relation this is clear. For the third condition it holds that for all $\rho \in \mathcal{N}(G)$ $\rho[x,y,u]$ is an interval triangle and $x$ or $y$ is in-between the other two. It follows that $x,y$ overlap from the same side with $u$.     
        \\"$\Leftarrow$": If $\lambda_{x,y} \in \{\disjoint, \circlecover \}$ this is clear. If $x,y$ overlap then they either form a non-Helly triangle or $u$ is in-between $x,y$. In both cases $x,y$ overlap from different sides with $u$.
    \end{proof}    
    \begin{theorem}
        \label{thm:uniform}
        The following mapping is a candidate function for uniform CA graphs and can be computed in logspace:
        {\normalfont
        $$ f_{\text{U}}(G) =  
        \bigcup_{\substack{u \in V(G)\\ x \in N^{\overlap}(u) }} 
        \big\{ \{ u  \} \cup N^{\contained}(u) \cup N^{\circlecover}(u) \cup \{ y \in N^{\overlap}(u) | x \sim_u y \} \big\}      
        $$
        }
    \end{theorem}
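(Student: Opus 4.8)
\section*{Proof proposal}

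The plan is to establish the two defining properties of a candidate function (Definition~\ref{def:candidatefunction}) separately: label-independence together with logspace-computability on the one hand, and the existence of a flip set in $f_{\text{U}}(G)$ for every uniform CA graph $G$ on the other. The first part I would dispatch quickly by exhibiting $f_{\text{U}}$ as an instance of the operator $f_\varphi$ and invoking Lemma~\ref{lem:easy}; the geometric content lives entirely in the second part.

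For the first part, observe that each intersection type $u \: \alpha \: v$ is decided by the neighborhood matrix $\lambda_G$, whose entries are in turn defined by first-order conditions on $G$ (inclusions and overlaps of closed neighborhoods). Consequently membership in $\Delta_G$ (Definition~\ref{def:deltag}) and the in-between predicate are expressible by FO formulas over graph structures, and hence so is the relation $x \sim_u y$. Therefore the set $\{u\} \cup N^{\contained}(u) \cup N^{\circlecover}(u) \cup \{ y \in N^{\overlap}(u) \mid x \sim_u y \}$ equals $\{ z \mid G \models \varphi(u,x,z) \}$ for a single FO formula $\varphi(u,x,z)$ with three free variables, so that $f_{\text{U}} = f_\varphi$. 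Label-independence and logspace-computability then follow directly from Lemma~\ref{lem:easy}.

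For the second part I would argue geometrically. Fix $\rho \in \mathcal{N}(G)$ and a vertex $u$ possessing an overlapping neighbor, and let $x \in N^{\overlap}(u)$ be such a neighbor; say $\rho(x)$ overlaps $\rho(u)$ from the side of the endpoint $p$ of $\rho(u)$. Place a point $q$ just inside $\rho(u)$, immediately adjacent to $p$, so that no arc endpoint lies strictly between $q$ and $p$. I claim the set $X$ of vertices whose arcs contain $q$ is exactly the set contributed by the pair $(u,x)$. Indeed $u \in X$; every $v$ with $u \: \contained \: v$ satisfies $\rho(u) \subseteq \rho(v)$ and hence $q \in \rho(v)$; every circle cover of $u$ contains a neighborhood of each endpoint of $\rho(u)$ and so also contains $q$; and an overlapping neighbor of $u$ contains $q$ iff it overlaps from the side of $p$. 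Conversely, no arc in $N^{\contains}(u)$ reaches $q$, since its $p$-side endpoint would have to lie strictly between $q$ and $p$, which is empty by the choice of $q$. Thus $X = \{u\} \cup N^{\contained}(u) \cup N^{\circlecover}(u) \cup R$, where $R$ collects the overlapping neighbors of $u$ on the side of $p$. Because $G$ is uniform, Lemma~\ref{lem:sameside} identifies $R$ with the $\sim_u$-class of $x$, i.e.~$R = \{ y \in N^{\overlap}(u) \mid x \sim_u y \}$, which is precisely the set indexed by $(u,x)$ in $f_{\text{U}}(G)$. Since $X$ is the set of arcs through the point $q$, it is a flip set by Definition~\ref{prop:flipset}, completing the argument.

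The step I expect to be the main obstacle is the careful geometric bookkeeping around $p$: verifying that the arcs properly contained in $\rho(u)$ are excluded while circle covers are included on both sides. The choice of $q$ with no intervening endpoint is what makes these claims clean, so I would state it precisely. A genuine edge case to address is when $\lambda_G$ has no $\overlap$ entry at all: then no reference vertex $x$ exists and the indexing set of $f_{\text{U}}(G)$ is empty, so this situation must be handled separately --- either by arguing it cannot arise for the relevant graphs (without overlaps the model is essentially laminar, and connectivity would force a universal vertex) or by absorbing it through the union of $f_{\text{U}}$ with $f_{\text{HCA}}$, which the closure of candidate functions under unions permits.
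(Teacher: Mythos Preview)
Your proposal is correct and follows essentially the same route as the paper: dispatch label-independence and logspace-computability by expressing the set via an FO formula and invoking Lemma~\ref{lem:easy}, and verify geometrically (with Lemma~\ref{lem:sameside} handling the overlapping neighbors) that the set indexed by a pair $(u,x)$ is precisely the set of arcs through a point just inside an endpoint of $\rho(u)$. Your flagging of the edge case where $\lambda_G$ has no $\overlap$ entry is a legitimate observation that the paper's own proof glosses over; your suggested fixes (ruling it out or absorbing it via the union with $f_{\text{HCA}}$) are both reasonable.
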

    \begin{proof}
        To show that $f_{\text{U}}(G)$ is a candidate function we have to prove that for every uniform CA graph $G$ there always exists a flip set $X \in f_{\text{U}}(G)$ and that $f_{\text{U}}$ is label-independent. In fact, the even stronger claim holds that for all uniform CA graphs $G$ every set in $f_{\text{U}}(G)$ is a flip set. Let $X \in f_{\text{U}}(G)$ via some $u \in V(G)$ and $x \in N^{\overlap}(u)$. Then the set of vertices in $X$ correspond to one of the two flip sets shown in Fig.~\ref{fig:uniform_flipset}. The correctness for the subset of vertices in $X$ overlapping with $u$ follows from Lemma \ref{lem:sameside}.         
        
        To show that $f_{\text{U}}(G)$ can be computed in logspace and is label-independent we apply Lemma \ref{lem:easy}. We can rewrite $f_{\text{U}}$ as 
        $$ f_{\text{U}}(G) =  
        \bigcup\limits_{u,x \in V(G)} 
        \big\{ \left\{ z \in V(G) \mid G \models \varphi(u,x,z)  \right\} \big\}      
        $$
        with $G \models \varphi(u,x,z)$ iff  $x \in N^{\overlap}(u)$ and $z \in \{ u  \} \cup N^{\contained}(u) \cup N^{\circlecover}(u) \cup \{ y \in N^{\overlap}(u) | x \sim_u y \}$. It remains to check that the entries in the neighborhood matrix, $\alpha$-neighborhoods, exclusive neighborhoods, $\Delta_G$, in-between and $\sim_u$ can be expressed in FO logic.
    \end{proof}
    
    \begin{corollary}
        A canonical CA representation for uniform CA graphs can be computed in logspace.
    \end{corollary}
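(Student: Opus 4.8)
The plan is to combine Theorem~\ref{thm:uniform} with the candidate-function machinery established at the start of Section~3. That earlier theorem shows how a logspace-computable candidate function for \emph{all} CA graphs yields a logspace canonical representation algorithm; I would observe that its representation argument in fact only uses the candidate-function property relative to whatever isomorphism-closed class one restricts to, and then instantiate it with the class of uniform CA graphs and the function $f_{\text{U}}$.

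Concretely, given a uniform CA graph $G$, I would first compute $f_{\text{U}}(G)$ in logspace (Theorem~\ref{thm:uniform}) and let $F$ be the subfamily of its flip sets, testing membership by checking whether $\lambda_G^{(X)}$ is an interval matrix for each $X \in f_{\text{U}}(G)$; this test is logspace since one can try to compute an interval representation via \cite{kob:hca}. By the first defining property of a candidate function, $F$ is non-empty for every uniform CA graph. For each $X \in F$ I would apply the flip trick: compute a canonical interval representation of $\lambda_G^{(X)}$, then flip the arcs in $X$ back to obtain a normalized CA representation $\rho_X$ of $G$. Finally I would return the representation whose underlying model $\rho_X(G)$ is lexicographically smallest, exactly as in the earlier proof. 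All of this stays in logspace because $f_{\text{U}}$, the interval-matrix test, and each $\rho_X$ are logspace-computable and logspace is closed under composition.

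The main point to verify --- and the only place where one must be careful that the earlier theorem was phrased for all CA graphs rather than for uniform ones --- is canonicity across isomorphic \emph{uniform} CA graphs. Here I would note that the class of uniform CA graphs is closed under isomorphism, since $\lambda_G$, $\mathcal{N}(G)$, $\Delta_G$, and the non-Helly property are all isomorphism-invariant notions; thus for isomorphic uniform CA graphs $G \isomorph H$ both inputs lie in the class and the canonicity argument of the earlier proof applies verbatim. Given an isomorphism $\pi$, label-independence of $f_{\text{U}}$ sends flip sets of $G$ to flip sets of $H$, the interval matrices $\lambda_G^{(X)}$ and $\lambda_H^{(\pi(X))}$ are isomorphic, and canonicity of the interval representation forces their underlying models to coincide; hence the two families of induced CA models agree and the lex-smallest choice is the same for $G$ and $H$. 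I expect no genuine obstacle beyond this observation: the corollary is a direct specialization of the candidate-function reduction to the isomorphism-closed subclass of uniform CA graphs.
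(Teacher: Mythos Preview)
Your proposal is correct and matches the paper's approach: the paper simply states the corollary without proof, treating it as an immediate consequence of Theorem~\ref{thm:uniform} together with the candidate-function theorem from Section~3. Your explicit observation that uniform CA graphs form an isomorphism-closed class (so that the canonicity argument restricts) is exactly the small point needed to make the specialization precise, and the paper leaves this implicit.
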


    \begin{theorem}
        Helly CA graphs are a strict subset of uniform CA graphs.
    \end{theorem}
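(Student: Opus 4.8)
The statement has two halves — containment (every HCA graph is uniform) and strictness (some uniform graph is not HCA) — and I would prove them separately. For containment, fix an HCA graph $G$ with a Helly representation, and let $T=\{u,v,w\}$ be an arbitrary $\overlap$-triangle with $T\in\Delta_G$. It suffices to show that $\rho[T]$ is a non-Helly triangle for \emph{every} $\rho\in\mathcal{N}(G)$, and since an $\overlap$-triangle is in each normalized representation either non-Helly or an interval triangle, the whole containment reduces to the single claim: if $T\in\Delta_G$ is represented as an interval triangle in some $\rho\in\mathcal{N}(G)$, then $G$ is \emph{not} HCA. The HCA hypothesis then forbids the interval case, forcing $\rho[T]$ to be non-Helly everywhere, which is exactly uniformity.

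The forcing step that turns an interval representation into structural information is as follows. If $\rho[T]$ is an interval triangle then by Lemma~\ref{lem:inbtw} one vertex is in-between the others, say $v$, so $\rho(v)\subseteq\rho(u)\cup\rho(w)$. Hence every arc meeting $\rho(v)$ already meets $\rho(u)$ or $\rho(w)$, and combining this with condition~1 of Definition~\ref{def:deltag} (that $N[u]\cup N[v]\cup N[w]=V(G)$) yields $N[u]\cup N[w]=V(G)$. Because $u\:\overlap\:w$ we also have $N[u]\between N[w]$, so the first two clauses of the $\circlecover$ case of $\lambda_G$ hold for the pair $u,w$; for this pair to remain $\overlap$ rather than $\circlecover$ the remaining clauses must fail, i.e.\ there must be an \emph{escape witness} $z\in N[u]\setminus N[w]$ (or symmetrically) with $N[z]\not\subseteq N[u]$, equivalently a vertex $z$ adjacent to $u$ but not $w$ that has a neighbour $y\notin N[u]$; by condition~1 again, $y$ is adjacent to $v$ or to $w$.

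I expect the crux of the whole proof to be showing that this escape witness is incompatible with being HCA. This is precisely the point where the general implication $T\in\Delta_G\Rightarrow$ non-Helly fails: in the non-uniform graph $\overline{3K_2}$ the vertices playing the role of $z,y$ are exactly the opposite copies forming a second, interlocking non-Helly triangle, which is what lets $\overline{3K_2}$ escape uniformity. The plan is to trace $z,y$ through the Helly representation, where $T$ is a clique and hence forced to share a common point; the constraints $z\sim u$, $z\not\sim w$, $z\sim y$, $y\not\sim u$, together with $y$ meeting $v$ or $w$, should force the arcs to wind around the circle in a way that produces a maxclique without a common point, contradicting the Helly property. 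Once this escape-versus-Helly lemma is in place the containment is complete; it is the only step that genuinely uses Hellyness, so I would budget most of the work there.

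For strictness I would exhibit a single CA graph that is uniform but not HCA. Note first that uniformity cannot be obtained vacuously here: a non-HCA graph has, by the lemma of~\cite{hsu}, a normalized representation, and that representation must be non-Helly (otherwise $G$ would be HCA), so it contains a non-Helly triangle and $\Delta_G\neq\emptyset$. The clean choice is therefore a \emph{rigid} CA graph whose normalized representation is essentially unique and non-Helly — concretely a non-Helly triangle stiffened by private $\contained$-neighbours and corner vertices, with its symmetry broken so that the alternative representations available to $\overline{3K_2}$ disappear. For such a $G$ two finite checks on the neighborhood matrix finish the argument: that its unique model is non-Helly, so $G$ has no Helly model and is not HCA; and that every $\overlap$-triangle satisfying the two conditions of Definition~\ref{def:deltag} is realized as a non-Helly triangle in that model, so $\Delta_G$ contains no interval triangle and $G$ is uniform. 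The only real difficulty in this part is pinning down an explicit graph whose model is provably rigid; with rigidity established both verifications are routine, and together with the containment they give that Helly CA graphs form a strict subclass of uniform CA graphs.
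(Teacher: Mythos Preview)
Your outline tracks the paper's proof closely for the first half of the containment argument: assume $T=\{u,v,w\}\in\Delta_G$ is an interval triangle in some $\rho$, deduce $N[u]\cup N[w]=V(G)$, and use $\lambda_{u,w}\neq\circlecover$ to extract an escape witness $z\in N[u]\setminus N[w]$ with a neighbour $y\notin N[u]$. (A small slip: you cite Lemma~\ref{lem:inbtw} to get ``$v$ is in-between'', but that lemma is stated for \emph{uniform} $G$ while here $G$ is non-uniform by hypothesis; the fact you need is just the definition of an interval triangle.) Where you diverge is in the endgame. The paper does \emph{not} switch to a Helly representation; it stays in $\rho$ and uses the second condition of $\Delta_G$, which you never invoke, to pin down $z\in N_T(u,v)$ (if $z$ were in $N_T(u)$ then $z\:\contained\:u$, contradicting $N[z]\not\subseteq N[u]$) and symmetrically $y\in N_T(v,w)$. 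This places $z$ and $y$ precisely enough that $\{z,v,y\}$ is an $\overlap$-triangle forced to be non-Helly in $\rho$, and the contradiction with HCA is read off directly. Your plan to ``trace $z,y$ through the Helly representation'' may also work, but it is a longer route and you leave it at ``should force the arcs to wind around''; the missing step is exactly the use of $\Delta_G$'s condition~2, which replaces your open disjunction ``$y$ adjacent to $v$ or to $w$'' by the sharp conclusion $y\in N_T(v,w)$.

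For strictness the paper's example is much simpler than what you propose: the \emph{net graph} (a triangle with one pendant at each vertex). No rigidity or symmetry-breaking is needed; the pendants make $N_T(t)\neq\emptyset$ for every $t\in T$, which forces the unique $\overlap$-triangle $T$ to be non-Helly in every normalized representation, so the graph is uniform and not HCA. Your worry that uniformity might hold ``vacuously'' is misplaced---even if $\Delta_G$ were empty, a vacuously uniform non-HCA graph would still witness strictness---and your justification that a non-Helly model must contain a non-Helly $\overlap$-triangle (hence $\Delta_G\neq\emptyset$) is not correct as stated: a non-Helly maxclique need not arise from three pairwise $\overlap$-related vertices.
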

    \begin{proof}
        First, we show "$\subseteq$" by contradiction. Assume there exists a Helly CA graph $G$ which is non-uniform. For $G$ to be non-uniform there must exist an $\overlap$-triangle $T \in \Delta_G$ and a representation $\rho \in \mathcal{N}(G)$ such that $T$ is represented as interval triangle in $\rho$. Let $T = \{u,v,w\}$ and assume w.l.o.g.~that $\rho(v) \subset \rho(u) \cup \rho(w)$ ($v$ is in-between $u,w$). It follows that $N[u] \cup N[w] = V(G)$. Since $\lambda_{u,w} \neq \circlecover$ there must be a $u' \in N[u] \setminus N[w]$ such that $N[u'] \setminus N[u] \neq  \emptyset$. This means there exists a $w' \in N[u',w] \setminus N[u]$. As $N[u'] \between N[w']$ it follows that $u'$ and $w'$ overlap. For $u'$ it must hold that it is either in $N_T(u)$ or $N_T(u,v)$. If it is in $N_T(u)$ then by the second condition of $\Delta_G$ it follows that $u'$ must be contained in $u$, contradiction. For the same reason $w'$ is in $N_T(v,w)$. It follows that $u',v,w'$ form an $\overlap$-triangle and must be represented as non-Helly triangle in $\rho$. This contradicts that $G$ is a Helly CA graph.
        
        To see that this inclusion is strict consider the graph $G$ obtained by taking a triangle $T$ and attaching a new vertex to each vertex in $T$ (also known as net graph). In every representation $\rho \in \mathcal{N}(G)$ it must hold that $T$ is represented as non-Helly triangle since $N_T(v) \neq \emptyset$ for all $v \in T$. For this reason $G$ cannot be a Helly or a non-uniform CA graph. 
    \end{proof}

\section{Non-Uniform CA graphs}
From the definition of uniform CA graphs it follows that a CA graph $G$ is non-uniform if there exists an $\overlap$-triangle $T$ in $\Delta_G$ and a $\rho \in \mathcal{N}(G)$ such that $T$ is represented as interval triangle in $\rho$. We call the pair $(T,\rho)$ a witness for the non-uniformity of $G$ and also say $G$ is non-uniform via $(T,\rho)$. Additionally, for such a witness pair $(T,\rho)$ we call $T$ maximal if there exists no $T' \neq T$ such that $G$ is non-uniform via $(T',\rho)$ and $\rho^+(T) \subset \rho^+(T')$. Such a maximal $T$ for a given $\rho$ must always exist. For the purpose of computing a candidate function for this class we can assume that for a given non-uniform CA graph $G$ we are supplied with an $\overlap$-triangle $T$ such that there exists a $\rho \in \mathcal{N}(G)$ with $(T,\rho)$ being a witness for $G$ and $T$ is maximal. This is justified by the fact that we can iterate over all $\overlap$-triangle $T \in \Delta_G$ trying to compute flip sets knowing that for at least one such $T$ these conditions are met. Additionally, we write $T$ as ordered triple $(u,v,w)$ to indicate that $v$ is in-between $u$ and $w$ in $\rho$. 

Let $G$ be non-uniform via $(T,\rho)$. Consider for a vertex $x \in V(G) \setminus T$ what the possible relations between $\rho(x)$ and $\rho^+(T)$ are. For instance, $\rho(x)$ cannot be disjoint from $\rho^+(T)$ because this implies that $x \notin \cup_{t \in T} N[t]$ and therefore $T \notin \Delta_G$. Also, $\rho(x)$ cannot contain $\rho^+(T)$ as this would mean that $x$ is a universal vertex. 
\begin{definition} 
    Let $G$ be a non-uniform CA graph via $(T,\rho)$. The set of normalized representations that agree with $\rho$ on $T$ is:
    $$ \mathcal{N}_{\rho}^T(G) = \left\{ \rho' \in \mathcal{N}(G) \mid \rho'[T] = \rho[T] \right\} $$
\end{definition}

\begin{definition}        
    Let $G$ be a non-uniform CA graph via $(T,\rho)$ and $\alpha \in \{\overlap, \circlecover, \contained \}$. We say $x \in V(G) \setminus T$ is an  $\alpha$-arc in $\rho'$ if
    $$ \rho'(x) \: \alpha \: \rho^+(T) $$
    for some $\rho' \in \mathcal{N}_\rho^T(G)$.  We call $x$ an unambiguous $\alpha$-arc if the above condition holds for all $\rho' \in \mathcal{N}_\rho^T(G)$.
\end{definition}  

    \begin{figure}[b]
        \centering
        \resizebox{12cm}{!}{%
            \begin{tikzpicture}[shorten >=1pt,auto,node distance=1.2cm,
  main node/.style={circle,draw}]

\newcommand*{\xoff}{0}%
\newcommand*{\yoff}{0}%

\newcommand*{\movea}{1.4}%
\newcommand*{\moveb}{0.8*sqrt(1/2)}%

\node (15) at ({\xoff},{\yoff}) {15};

\node (25) at ({\xoff-\moveb},{\yoff-\moveb}) {25};
\node (14) at ({\xoff+\moveb},{\yoff-\moveb}) {14};

\node (35) at ({\xoff-\moveb-\moveb},{\yoff-2*\moveb}) {35};
\node (24) at ({\xoff-\moveb+\moveb},{\yoff-2*\moveb}) {24};
\node (13) at ({\xoff+\moveb+\moveb},{\yoff-2*\moveb}) {13};

\node (45) at ({\xoff-\moveb-\moveb-\moveb},{\yoff-3*\moveb}) {45};
\node (34) at ({\xoff-\moveb},{\yoff-3*\moveb}) {34};
\node (23) at ({\xoff+\moveb},{\yoff-3*\moveb}) {23};
\node (12) at ({\xoff+\moveb+2*\moveb},{\yoff-3*\moveb}) {12};

\node (55) at ({\xoff-\moveb-\moveb-2*\moveb},{\yoff-4*\moveb}) {55};
\node (44) at ({\xoff-\moveb-\moveb},{\yoff-4*\moveb}) {44};
\node (33) at ({\xoff-\moveb+\moveb},{\yoff-4*\moveb}) {33};
\node (22) at ({\xoff+\moveb+\moveb},{\yoff-4*\moveb}) {22};
\node (11) at ({\xoff+\moveb+3*\moveb},{\yoff-4*\moveb}) {11};

\path[-]
(15) edge (25) 
(15) edge (14)

(25) edge (35) 
(25) edge (24) 
(14) edge (24) 
(14) edge (13) 

(35) edge (45)
(35) edge (34)
(24) edge (34)
(24) edge (23)
(13) edge (23)
(13) edge (12)

(45) edge (55)
(45) edge (44)
(34) edge (44)
(34) edge (33)
(23) edge (33)
(23) edge (22)
(12) edge (22)
(12) edge (11)
;

\newcommand*{\xoffa}{-9}%
\newcommand*{\yoffa}{-1}%
\newcommand*{\xlen}{3}%

\newcommand*{\ph}{1}%
\newcommand*{\pnh}{1.2}%
\newcommand*{\labelh}{0.9}%

\draw[line width=0.6pt] (\xoffa,\yoffa) -- (\xoffa+\xlen,\yoffa); 
\draw[line width=0.6pt] (\xoffa+\xlen/3,\yoffa+0.2) -- (\xoffa+\xlen/3+\xlen,\yoffa+0.2); 
\draw[line width=0.6pt] (\xoffa+2*\xlen/3,\yoffa-0.2) -- (\xoffa+2*\xlen/3+\xlen,\yoffa-0.2); 

\draw[line width=0.6pt,dotted] (\xoffa,\yoffa+\ph) -- (\xoffa,\yoffa-\pnh); 
\draw[line width=0.6pt,dotted] (\xoffa+\xlen/3,\yoffa+\ph) -- (\xoffa+\xlen/3,\yoffa-\pnh); 
\draw[line width=0.6pt,dotted] (\xoffa+2*\xlen/3,\yoffa+\ph) -- (\xoffa+2*\xlen/3,\yoffa-\pnh); 
\draw[line width=0.6pt,dotted] (\xoffa+3*\xlen/3,\yoffa+\ph) -- (\xoffa+3*\xlen/3,\yoffa-\pnh); 
\draw[line width=0.6pt,dotted] (\xoffa+4*\xlen/3,\yoffa+\ph) -- (\xoffa+4*\xlen/3,\yoffa-\pnh); 
\draw[line width=0.6pt,dotted] (\xoffa+5*\xlen/3,\yoffa+\ph) -- (\xoffa+5*\xlen/3,\yoffa-\pnh); 

\node (1) at (\xoffa+\xlen/6,\yoffa+\labelh) {1};
\node (1) at (\xoffa+\xlen/6+\xlen/3,\yoffa+\labelh) {2};
\node (1) at (\xoffa+\xlen/6+2*\xlen/3,\yoffa+\labelh) {3};
\node (1) at (\xoffa+\xlen/6+3*\xlen/3,\yoffa+\labelh) {4};
\node (1) at (\xoffa+\xlen/6+4*\xlen/3,\yoffa+\labelh) {5};

\node (u) at (\xoffa+0.2,\yoffa+0.15) {$u$};
\node (v) at (\xoffa+0.2+\xlen/3,\yoffa+0.35) {$v$};
\node (w) at (\xoffa-0.2+5*\xlen/3,\yoffa-0.05) {$w$};

\draw[line width=0.6pt] (\xoffa+1.5*\xlen/3,\yoffa-0.5) -- (\xoffa+4.5*\xlen/3,\yoffa-0.5);
\node[circle,draw,inner sep=0,minimum size=0.1cm,fill=white] (l1) at (\xoffa+1.5*\xlen/3,\yoffa-0.5) {};
\node[circle,draw,inner sep=0,minimum size=0.1cm,fill=white] (r1) at (\xoffa+4.5*\xlen/3,\yoffa-0.5) {};
\node[align=left] (xl) at (\xoffa+1.5*\xlen/3-0.08,\yoffa-0.5-0.32) {$l(x)$\\$\in 2$};
\node[align=left] (xr) at (\xoffa+4.5*\xlen/3-0.08,\yoffa-0.5-0.32) {$r(x)$\\$\in 5$};
\node[align=left] (x) at (\xoffa+1.5*\xlen/3+0.24,\yoffa-0.5+0.12) {$x$};

\end{tikzpicture}
        }
        \caption{Possible positions of the endpoints of $x$ relative to $T=\{u,v,w\}$}
        \label{fig:non_uniform_list}                    
    \end{figure}
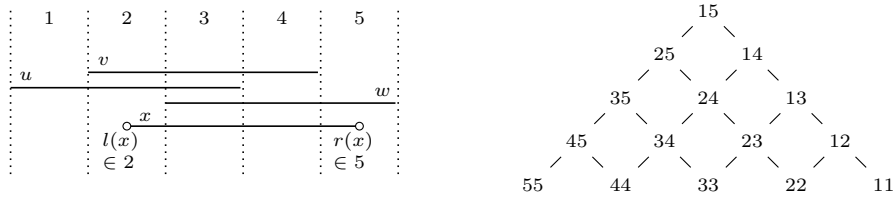  

    \begin{definition}
        Let $G$ be a graph and $T=(u,v,w) \in \Delta_G$. Then we define the following sets w.r.t.~$T$:
        \begin{alignat*}{2}
        &\Gamma_{\overlap,u} &&= \left\{ x \in V(G) \setminus T \mid                 
        x \: \overlap \:  u,v \: , \: x \: \disjoint \:  w   
        \right\} \\
        &\Gamma_{\overlap,w} &&= \left\{ x \in V(G) \setminus T \mid                 
        x \: \overlap \:  v,w \: , \: x \: \disjoint \:  u   
        \right\} \\        
        &\Gamma_\overlap &&= \Gamma_{\overlap,u} \cup \Gamma_{\overlap,w}\\                
        &\Gamma_\circlecover &&=  \left\{  x \in V(G) \setminus T \: \Big| 
        \begin{array}{l} 
        x \: \overlap \:  u,w  \: , \:                  
        x \: \disjoint \:  v  \text{ or}
        \\
        \exists a \in T: x \: \circlecover \: a                
        \end{array}                                    
        \right\} \\
        &\Gamma_\contained &&=  \left\{  x \in V(G) \setminus T \: \Big|  
        \begin{array}{l} 
        x \: \overlap \:  u,w  \: , \:                  
        x \: \contains \:  v  \text{ or}
        \\
        \exists a \in T: x \: \contained \: a                 
        \end{array}                                                               
        \right\}             \\
        &\Omega &&= \left\{ x \in V(G) \setminus T \mid x \: \overlap \: u,v,w \right\} \\
        \end{alignat*}
    \end{definition}                
    \begin{lemma}
        If $G$ is a non-uniform CA graph via $(T,\rho)$ such that $T$ is maximal then $T,\Gamma_\overlap,\Gamma_\circlecover,\Gamma_\contained,\Omega$ partition $V(G)$.       
    \end{lemma}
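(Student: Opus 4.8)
The plan is to split ``partition'' into its two halves and verify each inside the witnessing representation $\rho$: coverage, that every $x\in V(G)\setminus T$ lies in at least one of $\Gamma_\overlap,\Gamma_\circlecover,\Gamma_\contained,\Omega$, and disjointness, that no $x$ lies in two of them. Since $\rho[T]$ is an interval triangle with $v$ in between, $\rho^+(T)=\rho(u)\cup\rho(w)$ is a single arc and the six endpoints occur clockwise as $l(u),l(v),l(w),r(u),r(v),r(w)$, cutting $\rho^+(T)$ into five consecutive regions covered respectively by $\{u\},\{u,v\},\{u,v,w\},\{v,w\},\{w\}$, together with the outer region $R_0$ disjoint from $\rho^+(T)$. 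As $\rho$ realizes $\lambda_G$, each relation $x\:\alpha\:a$ with $a\in T$ is exactly the geometric type of $\rho(x)$ against $\rho(a)$, so the set containing $x$ is determined solely by the two regions housing the endpoints of $\rho(x)$.

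For coverage I would first recall the two facts noted before the lemma: condition~1 of $\Delta_G$ gives $\rho(x)\cap\rho^+(T)\neq\emptyset$, and the absence of universal vertices gives $\rho(x)\not\supseteq\rho^+(T)$. Hence $\rho(x)$ relates to the arc $\rho^+(T)$ as $\contained$, as $\overlap$ poking into $R_0$ on exactly one side, or as $\circlecover$. The $\contained$ and $\circlecover$ types are routine: a $\contained$ arc is swallowed by some $\rho(a)$, or contains $\rho(v)$ while overlapping $u,w$ (both giving $\Gamma_\contained$), or spans four regions and overlaps all three (giving $\Omega$); a $\circlecover$ arc contains all of $R_0$, hence circle-covers $u$ once it reaches regions $4,5$ and circle-covers $w$ once it reaches regions $1,2$, and otherwise meets only $u,w$ while being $\disjoint$ from $v$ --- in every case $\Gamma_\circlecover$.

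The delicate type is $\overlap$. Say $\rho(x)$ pokes out on the left and reaches up to region $b$. Condition~2 of $\Delta_G$ excludes $b=1$ (such $x$ is an exclusive neighbour of $u$, forced to satisfy $x\:\contained\:u$, contradicting that it pokes into $R_0$); $b=2$ gives $x\:\overlap\:u,v$ and $x\:\disjoint\:w$, i.e.\ $\Gamma_{\overlap,u}$; and $b=3$ gives $x\:\overlap\:u,v,w$, i.e.\ $\Omega$. The remaining case $b\ge 4$ yields $x\:\contains\:u$ with $x$ overlapping $v,w$, and such an $x$ lies in \emph{none} of the four sets. I expect this to be the main obstacle, and it is exactly where maximality of $T$ enters: from $\rho(u)\subsetneq\rho(x)$ I would form $T'=(x,v,w)$, again an interval triangle with $v$ in between and with $\rho^+(T)\subsetneq\rho^+(T')$, and derive a contradiction by showing $(T',\rho)$ is itself a non-uniformity witness. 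Condition~1 is inherited since $\rho^+(T)\subseteq\rho^+(T')$; for condition~2 I check $N_{T'}(v)=\emptyset$ (because $\rho(v)\subseteq\rho(x)\cup\rho(w)$) and, for $a\in\{x,w\}$, that any exclusive neighbour sticking out past $\rho(a)$ into $R_0$ would be an exclusive neighbour of $u$ resp.\ $w$ in the original $T$, hence $\contained$ there by $T\in\Delta_G$, which is incompatible with poking into $R_0$. Thus $T'\in\Delta_G$, contradicting maximality. The right-poking case and the $b\ge4$ analogue for $w$ are symmetric.

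For disjointness I would use that each $\lambda_{x,a}$ is single-valued, plus one geometric fact. The clauses of $\Gamma_{\overlap,u}$ and $\Gamma_{\overlap,w}$ prescribe $\disjoint$ from $w$ resp.\ $u$ together with adjacency to the other two, so they exclude one another and $\Omega$; membership in $\Omega$ demands $x\:\overlap\:a$ for all three $a\in T$, which is incompatible with the ``$\contains\:v$'' and ``$x\:\contained\:a$'' clauses of $\Gamma_\contained$ and with any ``$x\:\circlecover\:a$'' clause of $\Gamma_\circlecover$. The only non-obvious clash --- a vertex simultaneously circle-covering some $\rho(a)$ and contained in some $\rho(a')$ --- is impossible, since $x\:\circlecover\:a$ forces $\rho(x)\supseteq R_0$ whereas $x\:\contained\:a'$ forces $\rho(x)\subseteq\rho(a')\subseteq\rho^+(T)$, and $R_0\cap\rho^+(T)=\emptyset$. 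Collecting the cases yields both coverage and disjointness, and hence the partition.
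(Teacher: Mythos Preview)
Your coverage argument for $\overlap$-arcs has a genuine gap in the case $b=5$. You write that ``$b\ge 4$ yields $x\:\contains\:u$ with $x$ overlapping $v,w$'', but this is only true for $b=4$. When $r(x)$ lies in region~5 (the part covered by $w$ alone), the arc $\rho(x)$ swallows all of regions $2,3,4=\rho(v)$, so in fact $x\:\contains\:v$ as well. Consequently $\{x,v,w\}$ is \emph{not} an $\overlap$-triangle, it cannot lie in $\Delta_G$, and your maximality contradiction via $T'=(x,v,w)$ does not go through. This case is not harmless: such an $x$ satisfies $x\:\contains\:u$, $x\:\contains\:v$, $x\:\overlap\:w$, which matches none of $\Gamma_\overlap,\Gamma_\circlecover,\Gamma_\contained,\Omega$, so it must be ruled out by a separate argument.

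The paper treats $b=5$ by a different route: one checks directly from the definition of the neighborhood matrix that the $\circlecover$ conditions for the pair $(x,w)$ are forced (using condition~1 of $T\in\Delta_G$ to get $N[x]\cup N[w]=V(G)$ and condition~2 to control the exclusive neighborhoods), which contradicts $\lambda_{x,w}=\overlap$ as read off from the normalized representation~$\rho$. Your maximality idea cannot be salvaged here without producing a genuinely new $\overlap$-triangle, and there is no obvious candidate once $x$ contains both $u$ and $v$. (As a side remark, your treatment of $b=3$ --- placing it into $\Omega$ --- is correct and simpler than the paper's, which instead shows that $b\in\{3,4\}$ already contradicts maximality; either is fine for the partition statement.)
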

    \begin{proof}
        It is not hard to see that these sets do not overlap. To show that every vertex in $G$ belongs to one of these sets we need to check all possible positions of the endpoints of a vertex $x$ not in $T$ relative to $T$, consider Fig.~\ref{fig:non_uniform_list}. We know every vertex $x \notin T$ must be represented as $\alpha$-arc for some $\alpha \in \{\overlap,\contained,\circlecover\}$. For $\alpha \in \{\contained,\circlecover \}$ it must hold that both endpoints of $x$ must be in one of the intervals 1--5. The exemplary $x$ depicted in Fig.~\ref{fig:non_uniform_list} is a $\contained$-arc in the given representation and overlaps with $u,v,w$. If $x$ is a $\contained$-arc then the number of the interval in which its left endpoint is situated must be less than or equal that of its right endpoint (in our example $2 \leq 5$). If it is a $\circlecover$-arc then the right endpoint must come before the left. The graph on the right encodes all possible placements of the two endpoints and it can be verified case-by-case that an $\alpha$-arc for $\alpha \in \{\contained, \circlecover\}$ will occur in either $\Gamma_\alpha$ or $\Omega$. It remains to argue that an $\overlap$-arc can only have the intersection structure denoted by $\Gamma_{\overlap}$. W.l.o.g.~assume that $x$ is an $\overlap$-arc that overlaps from $u$'s side with $T$. It holds that the right endpoint of $x$ is in one of the five intervals and the left endpoint must be in none of these five intervals. If $r(x) \in 1$ then $x \in N_T(u)$ and by the second condition of $\Delta_G$ it must hold that $x$ is contained in $u$, contradiction. If $r(x) \in 2$ then $x$ overlaps with $u,v$ and is disjoint with $w$ and therefore $x \in  \Gamma_{\overlap,u}$. If $r(x) \in \{3,4\}$ then $T=\{u,v,w\}$ is not maximal since $\{x,v,w\} \in \Delta_G$. If $r(x) \in 5$ then it can be shown that there must be a circle cover entry in the neighborhood matrix for $x,w$ which contradicts that they must overlap. 
    \end{proof}

\begin{lemma}
      Let $G$ be a non-uniform CA graph via $(T,\rho)$ and $T$ is maximal. All vertices in $\Gamma_\alpha$ are unambiguous $\alpha$-arcs for $\alpha \in \{\overlap, \contained, \circlecover \}$.  
\end{lemma}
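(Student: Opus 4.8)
The plan is to fix an arbitrary representation $\rho' \in \mathcal{N}_\rho^T(G)$ and show that the intersection type of $\rho'(x)$ with $\rho^+(T)$ is forced to be $\alpha$ whenever $x \in \Gamma_\alpha$; since $\rho'$ is arbitrary this is exactly unambiguity. Two observations drive everything. First, because $\rho'$ agrees with $\rho$ on $T$, the three arcs $\rho'(u),\rho'(v),\rho'(w)$ and hence the combined arc $S := \rho^+(T)$ are literally the same as in $\rho$; as $T$ is represented as an interval triangle with $v$ in-between, $S$ is a single arc that does not cover the circle, and the five sub-intervals of Fig.~\ref{fig:non_uniform_list} together with the hole (the complement of $S$) are fixed once and for all. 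Second, since $\rho'$ is normalized it realizes $\lambda_G$, so the intersection types of $\rho'(x)$ with $\rho'(u),\rho'(v),\rho'(w)$ are exactly $\lambda_{x,u},\lambda_{x,v},\lambda_{x,w}$ --- precisely the data used to define the $\Gamma$-sets. Thus the membership conditions for $\Gamma_\alpha$ are representation-independent, and it remains to translate them into the position of the two endpoints of $\rho'(x)$ relative to $S$.

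I would dispatch the two structural subcases first, as they are immediate. If $x \: \contained \: a$ for some $a \in T$, then $\rho'(x) \subsetneq \rho'(a) \subsetneq S$ (each of $\rho(u),\rho(v),\rho(w)$ is a proper subarc of $S$ since the other two arcs protrude beyond it), so $\rho'(x) \: \contained \: S$. If $x \: \circlecover \: a$ for some $a \in T$, then $\rho'(x)\cup \rho'(a)$ is the whole circle; since $\rho'(a)\subseteq S$ this gives $\rho'(x)\cup S = $ circle, while $\rho'(a)\setminus\rho'(x)\neq\emptyset$ forces $S\setminus\rho'(x)\neq\emptyset$, and $\rho'(x)$ must cover the hole (which $S$ does not), so $\rho'(x)\setminus S\neq\emptyset$; hence $\rho'(x)\between S$ and $\rho'(x)\:\circlecover\: S$.

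The remaining work is the geometric forcing for the overlap-based conditions, and this is the step I expect to be the crux. For $x\in\Gamma_{\overlap,u}$ the relation $x\:\disjoint\: w$ confines $\rho'(x)$ to the complementary arc of $\rho(w)$, i.e.\ to the region consisting of the hole together with intervals $1$ and $2$; the relations $x\:\overlap\: u$ and $x\:\overlap\: v$ then rule out every placement except $l(x)$ in the hole and $r(x)$ in interval $2$ (any placement fully inside $u\cup v$ would give a containment instead of an overlap, and stopping inside interval $1$ would be disjoint from $v$). This pins one endpoint outside $S$ and one inside, with $\rho'(x)\cup S$ missing part of the hole, so $\rho'(x)\:\overlap\: S$; $\Gamma_{\overlap,w}$ is symmetric. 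For the overlap part of $\Gamma_\circlecover$ ($x\:\overlap\: u,w$ and $x\:\disjoint\: v$), disjointness from $v$ forces $\rho'(x)$ to avoid intervals $2,3,4$, so to meet both $u$ and $w$ it must run through the hole, giving $l(x)$ in interval $5$ and $r(x)$ in interval $1$; both endpoints then lie in $S$ but $\rho'(x)$ covers the hole, so $\rho'(x)\cup S$ is the whole circle and the type is $\circlecover$. For the overlap part of $\Gamma_\contained$ ($x\:\overlap\: u,w$ and $x\:\contains\: v$), containment of $v$ forces $l(x)\le l(v)$ and $r(x)\ge r(v)$, while overlapping (rather than containing) $u$ and $w$ forbids $l(x)$ from reaching the hole and $r(x)$ from reaching past $r(w)$; this pins $l(x)$ to interval $1$ and $r(x)$ to interval $5$, so $\rho'(x)\subsetneq S$ and the type is $\contained$.

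The main obstacle is exactly the exhaustiveness of these forcing arguments: for each overlap condition one must check that no alternative endpoint placement is compatible with the prescribed pairwise types, which is essentially the case analysis already carried out for $\rho$ in the preceding partition lemma. The point to emphasise is that this analysis uses nothing about $\rho'$ beyond the fixed geometry of $T$ and the fixed pairwise types with $u,v,w$; consequently the conclusion holds verbatim for every $\rho'\in\mathcal{N}_\rho^T(G)$, which is precisely the claim that the vertices of $\Gamma_\alpha$ are unambiguous $\alpha$-arcs.
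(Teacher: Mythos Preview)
Your proposal is correct and follows exactly the approach the paper intends: the intersection types of $x$ with $u,v,w$ --- which are fixed by $\lambda_G$ and realised in every $\rho'\in\mathcal{N}_\rho^T(G)$ --- pin down the endpoints of $\rho'(x)$ relative to the fixed arc $S=\rho^+(T)$, and hence the type of $\rho'(x)$ against $S$. The paper's own proof is a two-sentence sketch (``the intersection structure \ldots dictates the positioning of the endpoints relative to $T$ in every $\rho'$''), relying implicitly on the endpoint case analysis already carried out in the preceding partition lemma; your write-up simply makes that analysis explicit for each subcase of $\Gamma_\overlap,\Gamma_\circlecover,\Gamma_\contained$.
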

\begin{proof}
     The intersection structure of a vertex $x \in \Gamma_\alpha$ with $T$ dictates the positioning of the endpoints relative to $T$ in every $\rho' \in \mathcal{N}_\rho^T(G)$. It follows that this placement of the endpoints of $x$ must hold for all $\rho' \in \mathcal{N}_\rho^T(G)$ and therefore $x$ is an unambiguous $\alpha$-arc.
\end{proof}

For a vertex $x \in \Omega$ the possible placements of its endpoints to satisfy the intersection structure with $T$ can be one of the following four types $(\contained,14), (\contained,25), (\circlecover, 14), (\circlecover, 25)$. For example $x$ in Fig.~\ref{fig:non_uniform_list} is type $(\contained,25)$ and flipping $x$ would lead to type $(\circlecover,25)$.

Let $G$ be non-uniform via $(T,\rho)$ and $T=(u,v,w )$ is maximal. Then the two target flip sets $X_u,X_w$ we want to compute in the non-uniform case are immediately before the left endpoint and immediately after the right endpoint of $\rho^+(T)$ and must be of the form
$$ X_i = \Gamma_{\overlap,i} \cup \Gamma_{\circlecover} \cup \Omega' $$
for $i \in \{u,w\}$ and some subset $\Omega' \subseteq \Omega$. 

\begin{definition}
    Let $G$ be a non-uniform CA graph via $(T,\rho)$ and $T$ is maximal. We call a subset $\Omega' \subseteq \Omega$ $\circlecover$-realizable w.r.t.~$(T,\rho)$ if there exists a $\rho' \in \mathcal{N}_{\rho}^T(G)$ such that for all $x \in \Omega$ it holds that $x$ is a $\circlecover$-arc in $\rho'$ iff $x \in \Omega'$.
\end{definition}
In other words, a $\circlecover$-realizable set is a subset of $\Omega$ such that all of its vertices can be represented as $\circlecover$-arc in a normalized representation. By finding such a set we can construct two flip sets by adding the $\circlecover$-vertices in $\Gamma_\circlecover$ and one side of the $\overlap$-vertices as described above.    
Now, the challenge consists in finding such a $\circlecover$-realizable subset. A way to solve this is to parameterize our input by the cardinality of $\Omega$ and try all possibilities. Since $\Omega$ and its cardinality depend on the particular $\overlap$-triangle $T$ chosen, which we do not know a priori, we can use the following set which is a superset of every possible $\Omega$ and thus bounds the cardinality: 
$$ K_G = \left\{ x \in V(G) \mid \exists \{u,v,w\} \in \Delta_G \text{ s.t. } x \: \overlap \: u,v,w  \right\} $$

\begin{theorem}
    The following mapping is a candidate function for non-uniform CA graphs and can be computed in $\mathcal{O}(k + \log n)$ space for $k = |K_G|$:
    \normalfont
    $$ f_{\text{N}}(G) = \bigcup\limits_{(u,v,w) \in \Delta_G \atop \Omega' \subseteq \Omega  } \left\{ \Gamma_{\overlap,u} \cup \Gamma_{\circlecover} \cup \Omega' \right\}
    $$     
    where $\Gamma_{\overlap,u},\Gamma_{\circlecover},\Omega$ are taken w.r.t.~$T=(u,v,w)$.  
\end{theorem}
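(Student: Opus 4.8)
The plan is to verify the two properties required by Definition~\ref{def:candidatefunction}: that every non-uniform CA graph has a flip set among the sets produced by $f_{\text{N}}$, and that $f_{\text{N}}$ is label-independent; afterwards I bound the space. Fix a non-uniform CA graph $G$ and a witness $(T,\rho)$ with $T$ maximal, ordered as $T=(u,v,w)$ so that $v$ is in-between $u,w$ in $\rho$. Since $f_{\text{N}}$ ranges over every ordered triple whose underlying set is in $\Delta_G$, this ordered triple is among the summands, and for it the sets $\Gamma_{\overlap,u},\Gamma_\circlecover,\Omega$ are exactly the ones appearing in the theorem.

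For the flip-set condition I read the set off from $\rho$. Let $p$ be a point immediately before the left endpoint of $\rho^+(T)$; by Definition~\ref{prop:flipset} the set of arcs covering $p$ is a flip set. Using the partition of $V(G)$ into $T,\Gamma_\overlap,\Gamma_\circlecover,\Gamma_\contained,\Omega$ together with the fact that the vertices of each $\Gamma_\alpha$ are unambiguous $\alpha$-arcs, I determine which classes reach $p$: the arcs of $T$ and of $\Gamma_\contained$ lie inside $\rho^+(T)$ and miss $p$; the arcs of $\Gamma_{\overlap,w}$ meet $T$ only from $w$'s side and also miss $p$, whereas those of $\Gamma_{\overlap,u}$ extend past the left endpoint and cover $p$; each $\Gamma_\circlecover$-arc is a $\circlecover$-arc w.r.t.~$T$ and hence covers the entire complement of $\rho^+(T)$, so it covers $p$; finally an $\Omega$-vertex covers $p$ precisely when it is realized as a $\circlecover$-arc in $\rho$. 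Writing $\Omega'$ for the set of those $\Omega$-vertices, the arcs through $p$ are exactly $\Gamma_{\overlap,u}\cup\Gamma_\circlecover\cup\Omega'$, and $\Omega'$ is $\circlecover$-realizable, witnessed by $\rho$ itself. Because $f_{\text{N}}$ takes the union over all $\Omega'\subseteq\Omega$, this flip set is one of its summands and the first condition holds.

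For label-independence I cannot quote Lemma~\ref{lem:easy} directly, since the inner union ranges over all subsets $\Omega'\subseteq\Omega$ rather than over an FO-definable family. Instead I argue by hand. Each building block $\Delta_G$, $\Gamma_{\overlap,u}$, $\Gamma_\circlecover$ and $\Omega$ is defined solely from entries of the neighborhood matrix, hence is label-independent; thus for an isomorphism $\pi\colon G\to H$ the map $\pi$ carries each ordered $\Delta_G$-triple to an ordered $\Delta_H$-triple and carries the associated $\Gamma$- and $\Omega$-sets to those taken w.r.t.~the image triple. Consequently $\pi(\Gamma_{\overlap,u}\cup\Gamma_\circlecover\cup\Omega')=\Gamma_{\overlap,\pi(u)}\cup\Gamma_\circlecover\cup\pi(\Omega')$, and as $\Omega'$ runs over all subsets of $\Omega$ in $G$ its image $\pi(\Omega')$ runs over all subsets of the corresponding $\Omega$ in $H$. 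Gathering this over all triples yields the bijection proving $f_{\text{N}}(H)=\{\pi(X)\mid X\in f_{\text{N}}(G)\}$.

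For the space bound I enumerate the outer union by looping over all ordered triples $(u,v,w)$ and testing $\{u,v,w\}\in\Delta_G$ through its FO condition, costing $\mathcal{O}(\log n)$. Since $\Omega\subseteq K_G$ we have $|\Omega|\le k$, so the subsets $\Omega'$ are enumerated by a counter of at most $k$ bits, using $\mathcal{O}(k)$ space; crucially $\Omega$ is never stored explicitly — to decide $z\in\Omega'$ I compute in logspace the rank of $z$ among the $\Omega$-vertices and consult the matching counter bit. Emitting a set $\Gamma_{\overlap,u}\cup\Gamma_\circlecover\cup\Omega'$ is then a loop over $z\in V(G)$ testing the FO-definable memberships in $\Gamma_{\overlap,u}$ and $\Gamma_\circlecover$ and the rank-based membership in $\Omega'$, all within $\mathcal{O}(k+\log n)$ space. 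I expect the treatment of $\Omega$ to be the main obstacle throughout: structurally, confirming that the arcs through $p$ are captured exactly by $\Gamma_{\overlap,u}\cup\Gamma_\circlecover\cup\Omega'$ and that brute-forcing over all subsets of $\Omega$ is precisely what lets us hit the $\circlecover$-realizable $\Omega'$ without knowing a representation in advance; and computationally, carrying out this enumeration in $\mathcal{O}(k)$ rather than $\mathcal{O}(k\log n)$ space, which forces the rank-based membership test in place of an explicit listing of $\Omega$.
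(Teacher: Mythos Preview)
Your argument is correct and follows the same route as the paper: pick a maximal witness $(T,\rho)$, take the point just outside $\rho^+(T)$ on $u$'s side, and identify the arcs through it as $\Gamma_{\overlap,u}\cup\Gamma_\circlecover\cup\Omega'$ for the $\circlecover$-realizable $\Omega'$ read off from $\rho$; then handle label-independence and the space bound by enumerating subsets of $\Omega$ with a $k$-bit counter on top of the FO-definable pieces. Your treatment is in fact more explicit than the paper's in two places---you argue label-independence directly rather than invoking a formula with a second-order set variable (which Lemma~\ref{lem:easy} does not literally cover), and you spell out the rank-based membership test that keeps the $\Omega'$ enumeration within $\mathcal{O}(k)$ rather than $\mathcal{O}(k\log n)$ space---but the underlying ideas coincide.
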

\begin{proof}
 To show that there always exists a flip set $X \in f_{\text{N}}(G)$ for a non-uniform CA graph $G$ we argue as follows. Let $G$ be non-uniform via $(T,\rho)$ and $T=(u,v,w)$ is maximal. Let $\Omega'$ be a $\circlecover$-realizable set w.r.t.~$(T,\rho)$, which must exist since $G$ is non-uniform. Then $X = \Gamma_{\overlap,u} \cup \Gamma_{\circlecover} \cup \Omega'$ w.r.t.~$T$ is one of the target flip sets described previously. This means there exists a $\rho \in \mathcal{N}(G)$ such that $X$ describes the set of arcs that contain a point right before the left endpoint of $\rho^+(T)$ or a point right after the right endpoint of $\rho^+(T)$.      
 
 To see that $f_{\text{N}}(G)$ is label-independent a formula $\varphi(\Omega',u,v,w,x)$ can be constructed that is true iff $x \in  \Gamma_{\overlap,u} \cup \Gamma_{\circlecover} \cup \Omega' $ where $\Omega'$ is a second-order set variable. Note, that $|\Omega| \leq |K_G|$. Therefore this works in $\mathcal{O}(k + \log n)$ space since one can iterate over all $2^k$ subsets of $\Omega$ using $k$ bits and then apply the argument in Lemma \ref{lem:easy} via $\varphi$ which requires additional $\mathcal{O}(\log n)$ space. 
\end{proof}

\subsection*{Conclusion}
We showed how to canonically, or in our terms label-independently, compute flip sets for CA graphs to acquire canonical CA representations. The properties of uniform CA graphs enable us to do this easily in logspace. In the case of non-uniform CA graphs, however, it seems that the $\circlecover$-realizable sets pose a non-trivial obstacle when trying to compute flip sets. The only simple remedy appears to be the proposed parameterization that enables us to use brute force. Changing the target flip sets does not seem to improve upon this situation. As a consequence, we suggest to investigate the space of $\circlecover$-realizable sets. Given the restricted structure of non-uniform CA graphs this could be a reasonable first step towards deciding isomorphism for CA graphs in polynomial time.

Additionally, in \cite{mcc} it was shown how to compute flip sets for CA graphs in linear time without the canonicity constraint. Can this be done in logspace as well? This would mean that recognition of CA graphs is logspace-complete.  

\subparagraph*{Acknowledgments}
We thank the anonymous reviewers for their helpful comments on earlier drafts of this paper.

\bibliography{ca}

\begin{thebibliography}{1}

\bibitem{cur}
Andrew Curtis, Min~Chih Lin, Ross McConnell, Yahav Nussbaum, Francisco
  Soulignac, Jeremy Spinrad, and Jayme Szwarcfiter.
\newblock Isomorphism of graph classes related to the circular-ones property.
\newblock {\em Discrete Mathematics and Theoretical Computer Science}, 15(1),
  2013.

\bibitem{esc}
Elaine~Marie Eschen.
\newblock {\em Circular-arc Graph Recognition and Related Problems}.
\newblock PhD thesis, Vanderbilt University, Nashville, TN, USA, 1998.
\newblock UMI Order No. GAX98-03921.

\bibitem{hsu}
Wen-Lian Hsu.
\newblock {$O(M \cdot N)$} algorithms for the recognition and isomorphism
  problems on circular-arc graphs.
\newblock {\em SIAM J. Comput.}, 24(3):411--439, June 1995.

\bibitem{kob:intv}
Johannes Köbler, Sebastian Kuhnert, Bastian Laubner, and Oleg Verbitsky.
\newblock Interval graphs: Canonical representations in logspace.
\newblock {\em SIAM J. Comput.}, 40(5):1292--1315, 2011.

\bibitem{kob:hca}
Johannes Köbler, Sebastian Kuhnert, and Oleg Verbitsky.
\newblock Helly circular-arc graph isomorphism is in logspace.
\newblock In {\em MFCS 2013}, volume 8087 of {\em Lecture Notes in Computer
  Science}, pages 631--642. Springer Berlin Heidelberg, 2013.

\bibitem{kkv12}
Johannes K{\"o}bler, Sebastian Kuhnert, and Oleg Verbitsky.
\newblock {Solving the Canonical Representation and Star System Problems for
  Proper Circular-Arc Graphs in Logspace}.
\newblock In {\em FSTTCS}, volume~18, pages 387--399. Schloss Dagstuhl, 2012.

\bibitem{lue}
George~S. Lueker and Kellogg~S. Booth.
\newblock A linear time algorithm for deciding interval graph isomorphism.
\newblock {\em J. ACM}, 26(2):183--195, April 1979.

\bibitem{mcc}
Ross~M. McConnell.
\newblock Linear-time recognition of circular-arc graphs.
\newblock {\em Algorithmica}, 37(2):93--147, 2003.

\bibitem{wu}
Tsong-Ho Wu.
\newblock {\em An {$O(n^3)$} Isomorphism Test for Circular-Arc Graphs}.
\newblock PhD thesis, SUNY Stony Brook, New York, NY, USA, 1983.

\end{thebibliography}

\section{Appendix}
\begin{proof}[Proof of Lemma \ref{lem:easy}]
    $f_\varphi$ can be computed in logspace by successive evaluation of $\varphi$ and taking the union of the resulting sets. To prove that $f_\varphi$ is label-independent we use the following claim that can be verified by induction, see Lemma \ref{lem:folabel}.  For all FO formulas $\psi$ and isomorphic graphs $G,H$ it holds that
    $$ G \models \psi(v_1,\dots,v_k)  \iff H \models \psi(\pi(v_1),\dots,\pi(v_k))  $$
    for all isomorphisms $\pi$ from $G$ to $H$ and assignments $(v_1,\dots,v_k) \in V(G)^k$. Now, we must argue that $X \in f_\varphi(G)$ implies $\pi(X) \in f_\varphi(H)$. The other direction follows from a symmetrical argument. Let $X \in f_\varphi(G)$ then there exist $(v_1,\dots,v_k) \in V(G)^k$ such that $X = \left\{ u \in V(G) \mid G \models \varphi(v_1,\dots,v_k,u) \right\}$. This means 
    $$\pi(X) = \left\{ \pi(u) \mid u \in V(G) \text{ and } G \models \varphi(v_1,\dots,v_k,u)   \right\}$$
    We can replace $u \in V(G)$ with $u' \in V(H)$ and by the previous claim rewrite the above set as
    $$ \pi(X) = \left\{ u' \in V(H)  \mid H \models \varphi(\pi(v_1),\dots,\pi(v_k),u')   \right\} $$ 
    This concludes that $\pi(X) \in f_\varphi(H)$. 
\end{proof}    
    
\begin{lemma}
    \label{lem:folabel}
    For every FO formula $\varphi$ with $k$ free variables over graph structures and isomorphic graphs $G,H$ it holds that:
    $$ G \models \varphi(v_1,\dots,v_k)  \iff H \models \varphi(\pi(v_1),\dots,\pi(v_k))  $$
    for all isomorphisms $\pi$ from $G$ to $H$ and every assignment $(v_1,\dots,v_k) \in V(G)^k$.
\end{lemma}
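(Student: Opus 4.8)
The plan is to proceed by structural induction on the syntax of $\varphi$, showing that the stated equivalence is preserved under each formula constructor. The only properties of $\pi$ that I need are that it is a bijection and that it respects the edge relation, i.e.~$\{a,b\} \in E(G) \iff \{\pi(a),\pi(b)\} \in E(H)$, which is exactly what having $G \isomorph H$ via $\pi$ provides. Since the vocabulary of graph structures consists only of the binary edge predicate and equality, there are just two kinds of atomic formulas to treat in the base case.

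First I would handle the base case. For an equality atom $v_i = v_j$, injectivity of $\pi$ gives $v_i = v_j \iff \pi(v_i) = \pi(v_j)$. For an edge atom $E(v_i,v_j)$, the claim is precisely the defining property of the isomorphism $\pi$. Together these establish the base of the induction.

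For the inductive step I would treat the Boolean connectives and the quantifiers. The cases $\neg\psi$, $\psi_1 \wedge \psi_2$ and $\psi_1 \vee \psi_2$ follow immediately by applying the induction hypothesis to each immediate subformula and combining the resulting equivalences, since the truth value of a Boolean combination is determined by the truth values of its parts. For an existential formula $\exists y\, \psi(v_1,\dots,v_k,y)$ I would argue both directions: if $G \models \exists y\, \psi$ then there is a witness $a \in V(G)$ with $G \models \psi(v_1,\dots,v_k,a)$, so by the induction hypothesis $H \models \psi(\pi(v_1),\dots,\pi(v_k),\pi(a))$ and $\pi(a)$ witnesses the existential in $H$; conversely, a witness $b \in V(H)$ can be written as $b = \pi(a)$ by surjectivity of $\pi$, and the induction hypothesis transports it back to $G$. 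The universal case is dual and can either be argued the same way or reduced to the existential case via $\forall y\, \psi \equiv \neg \exists y\, \neg \psi$.

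The step that deserves the most care, and which I expect to be the crux, is the quantifier case, because this is the only place where the full bijectivity of $\pi$ is invoked: injectivity is harmless here, but \emph{surjectivity} is essential to guarantee that every witness in $H$ arises as the image of some witness in $G$, which is what makes the backward direction go through. Every remaining case is purely propositional bookkeeping and carries no graph-theoretic content.
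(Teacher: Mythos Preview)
Your proposal is correct and follows essentially the same approach as the paper: structural induction on the syntax of FO formulas, with the edge atom as base case and the Boolean connectives and quantifiers handled in the inductive step. Your treatment is in fact slightly more careful than the paper's (you include the equality atom and make explicit the role of surjectivity in the backward direction of the existential case), but the argument is the same.
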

\begin{proof}
    We show this by structural induction over FO formulas. For the base case $\varphi = E(x_i,x_j)$ the statement is clear, i.e.
    $$ G \models E(u,v)  \iff H \models E(\pi(u),\pi(v))  $$
    for all isomorphisms $\pi$ and $u,v \in V(G)$.
    For the inductive step we consider the boolean connectives $\neg, \wedge, \vee$ and the quantifiers $\exists, \forall$. Let $\varphi =  \neg \psi(x_1,\dots,x_k)$. 
    \begin{align*}
    & G \models \neg \psi(v_1,\dots,v_k) \\ \iff&
    G \not\models \psi(v_1,\dots,v_k)\\ \stackrel{\text{I.H.}}{\iff}&
    H \not\models \psi(\pi(v_1),\dots,\pi(v_k))\\ \iff&    
    H \models \neg \psi(\pi(v_1),\dots,\pi(v_k))                   
    \end{align*}           
    holds for all isomorphisms $\pi$ from $G$ to $H$ and $v_1,\dots,v_k \in V(G)$.
    For the cases $\wedge$ and $\vee$ this is similar. Let $\varphi = \exists x \psi(x,x_1,\dots,x_k)$. 
    \begin{align*}
    & G \models \exists x \psi(x,v_1,\dots,v_k) \\ \iff&
    \text{there exists } v \in V(G) \text{ such that } G \models \psi(v,v_1,\dots,v_k)  \\ \stackrel{\text{I.H.}}{\iff}&
    H \models \psi(\pi(v),\pi(v_1),\dots,\pi(v_k)) \\ \iff&
    H \models \exists x \psi(x,\pi(v_1),\dots,\pi(v_k)) 
    \end{align*} 
    for all isomorphisms $\pi$ from $G$ to $H$ and $v_1,\dots,v_k \in V(G)$.
    A similar argument holds for the $\forall$-case.                   
\end{proof}

\begin{lemma}
    \label{lem:flipseteq}
    Given a simple CA graph $G$ it holds that $X \subseteq V(G)$ is a flip set iff $\lambda_G^{(X)}$ is an interval matrix.
\end{lemma}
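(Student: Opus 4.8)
The plan is to prove the two implications using facts already in place: the correctness of Table~\ref{tab:flip} from \cite{mcc}, i.e.\ that the intersection matrix of a flipped model $\mathcal{A}^{(X)}$ is obtained entrywise from $\mu_\mathcal{A}$ by the substitutions of the table; and the geometric observation that if $X$ is exactly the set of arcs of $\mathcal{A}$ containing a point $x$, then $\mathcal{A}^{(X)}$ has a hole at $x$ and is therefore an interval model. Two bookkeeping remarks carry most of the work. First, the entrywise flip of Table~\ref{tab:flip} is an involution: applying it twice with the same set restores every entry, as one checks directly from the table, so $(\mu^{(X)})^{(X)} = \mu$ for every intersection matrix and in particular $(\lambda_G^{(X)})^{(X)} = \lambda_G$. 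Second, the matrix flip commutes with the model flip along any labelling: if a bijection $f$ is an isomorphism from an intersection matrix $\mu$ to $\mu_\mathcal{A}$, and $f'$ denotes the labelling that sends $v \in X$ to $\overline{f(v)}$ and fixes the rest, then $f'$ is an isomorphism from $\mu^{(X)}$ to $\mu_{\mathcal{A}^{(X)}}$; this is immediate once one notes that the table rule for an entry depends only on which of its two arcs are flipped. Since the definition of a flip set demands $X \neq \emptyset$, I assume this throughout.

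For the forward direction, let $X$ be a flip set, witnessed by $\rho \in \mathcal{N}(G)$ and a point $x$ with $v \in X \Leftrightarrow x \in \rho(v)$. Then $X$ is exactly the set of arcs of $\rho(G)$ containing $x$, so by the geometric observation $\rho(G)^{(X)}$ has a hole at $x$ and is an interval model. As $\rho$ represents $\lambda_G$, its labelling is an isomorphism from $\lambda_G$ to $\mu_{\rho(G)}$; by the commutation remark the flipped labelling is an isomorphism from $\lambda_G^{(X)}$ to $\mu_{\rho(G)^{(X)}}$. Hence $\lambda_G^{(X)}$ is isomorphic to the intersection matrix of an interval model, i.e.\ it is an interval matrix.

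For the converse, let $\lambda_G^{(X)}$ be an interval matrix, realised by an interval model $\mathcal{I}$ via a labelling $f$. Bending the line into the circle, view $\mathcal{I}$ as a CA model with a hole at an interior point $p$, so that no arc of $\mathcal{I}$ contains $p$. Flip the arcs $f(v)$, $v \in X$, to obtain $\mathcal{I}^{(X)}$; by the commutation and involution remarks the flipped labelling $f'$ is an isomorphism from $(\lambda_G^{(X)})^{(X)} = \lambda_G$ to $\mu_{\mathcal{I}^{(X)}}$, so $\rho := (\mathcal{I}^{(X)}, f')$ lies in $\mathcal{N}(G)$. For $v \notin X$ the arc $\rho(v) = f(v)$ avoids $p$, while for $v \in X$ the arc $\rho(v) = \overline{f(v)}$ contains $p$ precisely because $f(v)$ does not. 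Thus $v \in X \Leftrightarrow p \in \rho(v)$, which exhibits $X$ as a flip set.

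The genuine content is confined to the first paragraph: the main thing to get right is the pair of bookkeeping claims, namely that Table~\ref{tab:flip} defines an involution and commutes with flipping arcs along any labelling. Both reduce to entrywise checks against the five table columns and are not real obstacles, but they are where an error would hide. The only remaining wrinkle is the degenerate case $X = \emptyset$, for which $\lambda_G^{(\emptyset)} = \lambda_G$ is an interval matrix exactly when $G$ is an interval graph whereas $\emptyset$ is by definition not a flip set; restricting to non-empty $X$ — equivalently, always choosing the witnessing point $p$ inside an arc — removes this discrepancy and is consistent with how flip sets are used.
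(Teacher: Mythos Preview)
Your proof is correct and follows essentially the same approach as the paper: both directions construct the relevant representation by flipping and use the hole point as the witness. The only cosmetic difference is that for the converse the paper phrases the same construction as a proof by contradiction, whereas you give it directly and make the involution/commutation bookkeeping explicit.
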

\begin{proof}
    If $X$ is a flip set then there exists a $\rho \in \mathcal{N}(G)$ and a point $x$ on the circle such that exactly all vertices in $X$ have the point $x$ in common in $\rho$. The resulting CA representation $\rho^{(X)}$ after flipping these arcs is an interval representation since $\rho^{(X)}(G)$ must have a hole at $x$. Since $\lambda_G^{(X)}$ must be isomorphic to the intersection matrix of $\rho^{(X)}(G)$ via $\rho$ it follows that it is an interval matrix.
    
    For the other direction let $X \subseteq V(G)$ such that $\lambda_G^{(X)}$ is an interval matrix. To reach a contradiction assume that $X$ is not a flip set. Then for every $\rho \in \mathcal{N}(G)$ and all points $x$ on the circle it holds that there exists a $v \in V(G)$ such that either $v \in X$ or $x \in \rho(v)$. Since $\lambda_G^{(X)}$ is an interval matrix it holds that there exists a $\rho_1 \in \mathcal{N}(\lambda_G^{(X)})$ such that $\rho_1$ has a hole. By flipping the arcs in $X$ in $\rho_1$ we acquire the representation $\rho_1^{(X)} \in \mathcal{N}(\lambda_G)$ for $G$. It holds that for every point $x$ on the circle there exists a $v \in V(G)$ such that $v \in X$ and $x \notin \rho_1^{(X)}(v)$ or $v \notin X$ and $x \in \rho_1^{(X)}(v)$. It follows that after flipping the set of arcs $X$ in $\rho_1^{(X)}$ back that no hole can exists which contradicts that $\rho_1$ has been an interval representation.
\end{proof}

\end{document}